\newtheorem{proposition}{Proposition} [section]
\newtheorem{corollary}[proposition]{Corollary}
\newtheorem{lemma}[proposition]{Lemma}
\theoremstyle{definition}
\newtheorem {definition}[proposition]{Definition}
\title{Coalition and Group Announcement Logic}
\author{Rustam Galimullin \qquad\qquad Natasha Alechina
\institute{School of Computer Science\\
University of Nottingham\\
Nottingham, UK}
\email{\{rustam.galimullin, natasha.alechina\}@nottingham.ac.uk}
}
\begin{document}
\maketitle








\begin{abstract}
Dynamic epistemic logics which model abilities of agents to make various announcements and influence each other's knowledge have been studied extensively in recent years. Two notable examples of such logics are Group Announcement Logic and Coalition Announcement Logic. They allow us to reason about what groups of agents can achieve through joint announcements in non-competitive and competitive environments. In this paper, we consider a combination of these logics -- Coalition and Group Announcement Logic and provide its complete axiomatisation. Moreover, we partially answer the question of how group and coalition announcement operators interact, and settle some other open problems.
\end{abstract}





\section{Introduction}

To introduce the logics we will be working with in this paper, we start with an example loosely based on the one from \cite{renne09}.
Let us imagine that Ann, Bob, and Cath are travelling by train from Nottingham to Liverpool through Manchester. Cath was sound asleep all the way, and she has just woken up. She does not know whether the train passed Manchester, but Ann and Bob know that it has not. Now, if the train driver announces that the train is approaching Manchester, then Cath, as well as Ann and Bob, knows that they have not passed the city yet. To reason about changes in agents' knowledge after public announcements, we can use Public Announcement Logic (\(\mathbf{PAL}\)) \cite{plaza07}. Returning to the example, let us assume that the train driver does not announce anything, so that Cath is not aware of her whereabouts. Ann and Bob may tell her whether they passed Manchester.
In other words, Ann and Bob have an announcement that can influence Cath's knowledge. An extension of \(\mathbf{PAL}\), Group Announcement Logic (\(\mathbf{GAL}\)) \cite{agotnes10}, deals with the \emph{existence} of announcements by groups of agents that can achieve certain results. Now, let us assume that Ann does not want to disclose to Cath their whereabouts and Bob does, i.e. Ann and Bob have different goals. Then, it is clear that no matter what Ann says, the coalition of Bob and Cath can achieve the goal of Cath knowing that the train has not passed Manchester, that is, Bob can communicate this information to Cath. On the other hand, if Ann and Bob work together, then they have an announcement (for example, a tautology `It either rains in Liverpool or it doesn't'), such that whatever Cath says, she remains unaware of her whereabouts. For this type of strategic behaviour, another extension of \(\mathbf{PAL}\) -- Coalition Announcement Logic (\(\mathbf{CAL}\)) -- has been introduced in \cite{agotnes08}.

$\mathbf{CAL}$ joins two logical traditions: Dynamic Epistemic Logic, of which \(\mathbf{PAL}\) is a representative, and Coalition Logic (\(\mathbf{CL}\)) \cite{pauly02}. The latter allows us to reason about whether a coalition of agents has a strategy to achieve some goal, no matter what the agents outside of the coalition do. \(\mathbf{CL}\) essentially talks about concurrent games, and the actions that the agents execute are arbitrary actions (strategies in one-shot games). So, from this perspective, \(\mathbf{CAL}\) is a coalition logic with available actions restricted to public announcements. 

To the best of our knowledge, there is no complete axiomatisation of \(\mathbf{CAL}\) \cite{agotnes08, vanditmarsch12, agotnes14, agotnes16} or any other logic with coalition announcement operators. In this paper, we consider Coalition and Group Announcement Logic (\(\mathbf{CoGAL}\)), a combination of \(\mathbf{GAL}\) and \(\mathbf{CAL}\), which includes operators for both group and coalition announcements. The main result of this paper is a sound and complete axiomatisation of  \(\mathbf{CoGAL}\). As part of this result, we study the interplay between group and coalition announcement operators, and partially settle the question on their interaction that was stated as an open problem in \cite{vanditmarsch12, agotnes16}.

\section{Coalition and Group Announcement Logic} \label{sec:cogal}
\subsection {Syntax and Semantics}

Throughout the paper, let a finite set of agents \(A\), and a countable set of propositional variables \(P\) be given. The language of the logic is comprised of the language of classical propositional logic with added operators for agents' knowledge \(K_a \varphi\) (reads `agent \(a\) knows \(\varphi\)'), and public announcement \([\psi] \varphi\) (reads `after public announcement that \(\psi\), \(\varphi\) holds), group \([G] \varphi\) (`after any public announcement by group of agents \(G\), \(\varphi\) holds), and coalition announcements \([ \! \langle G \rangle \! ] \varphi\) (`for every public announcement by coalition of agents \(G\) there is an announcement by other agents \(A \setminus G\), such that \(\varphi\) holds after joint simultaneous announcement').

\begin{definition} (Language)
    The \emph{language of coalition and group announcement logic} \(\mathcal{L}_{CoGAL}\) is as follows:
    \begin{center}
        \(\varphi,\psi ::= p \mid \neg \varphi \mid (\varphi \wedge \psi) \mid K_a \varphi \mid [\varphi]\psi \mid [G]\varphi \mid [ \! \langle G \rangle \! ] \varphi\),
    \end{center}
    where \(p \in P\), \(a \in A\), \(G \subseteq A\), and all the usual abbreviations of propositional logic (such as \(\vee, \rightarrow, \leftrightarrow\)) and conventions for deleting parentheses hold. The dual operators are defined as follows: \(\widehat {K}_a \varphi \leftrightarrow \neg K_a \neg \varphi\), \(\langle \varphi \rangle \psi \leftrightarrow \neg [\varphi] \neg \psi\), \(\langle G \rangle \varphi \leftrightarrow \neg [ G ] \neg \varphi\), and \( \langle \! [ G ] \! \rangle \varphi \leftrightarrow \neg [ \! \langle G \rangle \! ] \neg \varphi\).  
Observe that  $\langle G \rangle \varphi$ means that $G$ has an announcement 
after which $\varphi$ holds, and $\langle \! [ G ] \! \rangle \varphi$ means 
that $G$ has an announcement such that after it is made simultaneously with 
any announcement by $A\setminus G$, $\varphi$ holds. The latter corresponds
to the Coalition Logic operator, but for announcements instead of arbitrary 
actions.

We define \(\mathcal{L}_{GAL}\) as the language without the operator \([ \! \langle G \rangle \! ]\), \(\mathcal{L}_{PAL}\) the language without \([G]\) as well, and \(\mathcal{L}_{EL}\) the purely epistemic language which in addition does not contain announcement operators \([\varphi]\).

\end{definition}

Next definition is needed for technical reasons in the formulation of infinite rules of inference in Definition \ref{def::axiomatisation}. We want the rules to work for a class of different types of premises. Ultimately, we require premises to be expressions of  depth \(n\) of the type \(\varphi_1 \rightarrow \square_1 (\varphi_2 \rightarrow \mathellipsis (\varphi_n \rightarrow \square_n \sharp) \mathellipsis)\), where \(\square_i\) is either \(K_a\) or \([\psi]\) for some \(a \in A\) and \(\psi \in \mathcal{L}_{CoGAL}\), atom \(\sharp\) denotes a placement of a formula to which a derivation is applied,  and some \(\varphi\)'s and \(\square\)'s can be omitted. This condition is captured succinctly by necessity forms originally introduced by Goldblatt in \cite{goldblatt}.

\begin{definition} (Necessity forms)
    Let \(\varphi \in \mathcal{L}_{CoGAL}\), then \emph{necessity forms} \cite{goldblatt} are inductively defined as follows:
    \begin{center}
        \(\eta ::= \sharp \mid \varphi \rightarrow \eta (\sharp) \mid K_a \eta (\sharp) \mid [\varphi] \eta (\sharp)\).
    \end{center}
    The atom \(\sharp\) has a unique occurrence in each necessity form. The result of the replacement of \(\sharp\) with \(\varphi\) in some \(\eta (\sharp)\) is denoted as \(\eta (\varphi)\). 
\end{definition}

Whereas formulas of coalition logic \cite{pauly02} are interpreted in game structures, formulas of \(\mathbf{CoGAL}\) are interpreted in epistemic models. Let us consider an example of such a model first. 
\begin{figure}[h]
    \centering
    \includegraphics {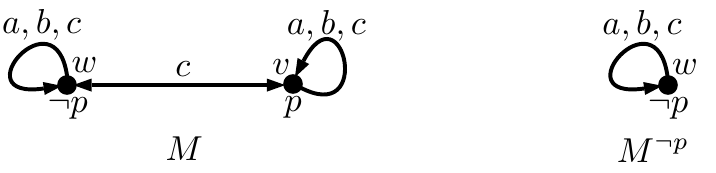}
    \caption{Train example}
    \label{fig::tark1}
\end{figure}
In Figure \ref{fig::tark1} there are three agents: \(a\) (Ann), \(b\) (Bob), and \(c\) (Cath). Let \(p\) denote the proposition that `The train has passed Manchester.' There are two states in the model \(M\): a state $w$ where
\(\neg p\) is true, and a state $v$ where \(p\) is true; and only one state in model \(M^{\neg p}\) which denotes $M$ updated by the announcement $\neg p$ (the process of updating the model is described below). 
Let the $w$ be the actual state. Edges connect states that an agent cannot distinguish. In the actual state $w$ of $M$, Cath (agent \(c\)) does not know whether $p$ is true. Ann and Bob, on the contrary, know that $p$ is false. Now suppose that Bob announces that $\neg p$. This truthful public announcement `deletes' all the states where $p$ is true, and the corresponding epistemic indistinguishability relations; in this example, $v$ is `deleted,' and the resulting model is \(M^{\neg p}\). After this announcement Cath knows \(\neg p\), or, formally, \([\neg p] K_c \neg p\). In this paper, within group and coalition announcements, we only quantify over announcements of formulas of the type \(K_a \varphi\). If a group consists only of Cath, who does not know $\neg p$ and hence cannot announce $K_c \neg p$, the following holds in state $w$ of \(M\): \([c] (\neg K_c \neg p \wedge \neg K_c p)\), i.e. whatever \(c\) announces, she still does not know whether \(p\) after the announcement \footnote{For readability, we use $[c]$ rather than $[\{c\}]$ for singleton coalitions.}.  Also, Ann and Bob can remain silent (or announce a tautology \(\top\)) and preclude Cath from knowing that $\neg p$. In other words, there is announcement by their group such that after it is made, agent \(c\) does not know the value of \(p\): \(\langle \{a, b\} \rangle (\neg K_c \neg p \wedge \neg K_c p)\). Moreover, this holds whatever Cath announces at the same time: \(\langle \! [ \{a, b\} ] \! \rangle (\neg K_c \neg p \wedge \neg K_c p)\). On the other hand, a coalition consisting of Ann and Cath does not have such a power, since Bob can always announce that $\neg p$:
\(\neg \langle \! [ \{a, c\} ] \! \rangle (\neg K_c \neg p \wedge \neg K_c p)\), or, equally, \([ \! \langle \{a, c\} \rangle \! ] (K_c \neg p \vee K_c p)\).

Now, we provide formal definitions.

\begin{definition} (Epistemic model)
An \emph{epistemic model} is a triple \(M = (W, \sim, V)\), where
\begin{itemize}
    \item \(W\) is a non-empty set of states;
    \item \(\sim:A \rightarrow \mathcal{P}(W \times W)\) assigns an equivalence relation to each agent; we will denote relation assigned to agent $a \in A$ by $\sim_a$;
    \item \(V:P \rightarrow \mathcal{P}(W)\) assigns a set of states to each propositional variable. 
\end{itemize}
A pair \((W,\sim)\) is called an \emph{epistemic frame}, and a pair \((M,w)\) with \(w \in W\) is called a \emph{pointed model}. An announcement in a pointed model \((M,w)\) results in an \textit{updated pointed model}  
\((M^\varphi, w)\).
Here \(M^\varphi = (W^\varphi, \sim^\varphi, V^\varphi)\), and \(W^\varphi = \llbracket \varphi \rrbracket_M\), \(\sim^\varphi_a = \sim_a \cap\) \((\llbracket \varphi \rrbracket_M \times \llbracket \varphi \rrbracket_M)\), and \(V^\varphi (p) = V(p) \cap \llbracket \varphi \rrbracket_M\). Generally speaking, an updated pointed model \((M^\varphi,w)\) is a restriction of the original one to the states where \(\varphi\) holds. 
\end{definition}

Let \(\mathcal{L}_{EL}^G\) denote the set of formulas of the type \(\bigwedge_{i \in G} K_i \varphi_i\), where for every \(i \in G\) it holds that \(\varphi_i \in \mathcal{L}_{EL}\). These are the formulas we will be quantifying
over in modalities of the form $[G]$ and $[ \! \langle G \rangle \! ]$. 

\begin{definition} (Semantics)
Let a pointed model \((M,w)\) with \(M = (W\), \(\sim, V)\), \(a \in A\), and \(\varphi\), \(\psi \in \mathcal{L}_{CoGAL}\) be given.
\begin{center}
\(
\begin{array} {lcl}
(M,w) \models p  &\textrm{iff} &w \in V(p)\\

(M,w) \models \neg \varphi &\textrm{iff} &(M,w) \not \models \varphi\\

(M,w) \models \varphi \wedge \psi &\textrm{iff} &(M,w) \models \varphi \textrm{ and } (M,w) \models \psi\\

(M,w) \models K_a\varphi &\textrm{iff} &\forall v \in W: w \sim_a v \textrm{ implies } (M,v) \models \varphi\\

(M,w) \models [\varphi]\psi &\textrm{iff} &(M,w) \models \varphi \textrm{ implies } (M^\varphi,w) \models \psi\\ 

(M,w) \models [G] \varphi &\textrm{iff} &\forall \psi {\in} \mathcal{L}_{EL}^G: (M,w) \models  [ \psi ] \varphi\\

(M,w) \models [ \! \langle G \rangle \! ]\varphi &\textrm{iff} &\forall \psi {\in}  \mathcal{L}_{EL}^G \ \exists \chi {\in}  \mathcal{L}_{EL}^{A \setminus G}:  (M,w) \models \psi \rightarrow \langle \psi \wedge \chi \rangle \varphi\\

\end{array}\)
\end{center}

\end{definition}
\noindent
 Formula \(\varphi\) is called \textit{valid} if for any pointed model \((M,w)\) it holds that \((M,w) \models \varphi\).  

The semantics for the `diamond' versions of knowledge, public and group announcement operators (\(\widehat {K}_a \varphi\), \(\langle \varphi \rangle \psi\), and \(\langle G \rangle \varphi\) respectively) are obtained by changing \(\forall\) to \(\exists\) and `implies' to `and' in the corresponding lines. The semantics for a dual of the coalition announcement operator is as follows:

\begin{center}
\(
\begin{array}{lcl}

(M,w) \models \langle \! [ G ] \! \rangle \varphi &\textrm{iff} &\exists \psi {\in}  \mathcal{L}_{EL}^G \ \forall \chi {\in}  \mathcal{L}_{EL}^{A\setminus G}: (M,w) \models \psi\wedge [ \psi \wedge \chi ] \varphi,\\

\end{array}\)
\end{center}
which corresponds to `there is an announcement by agents from \(G\), such that whatever other agents \(A \setminus G\) announce at the same time, \(\varphi\) holds.' 

Note that following \cite{balbiani07, balbiani08, agotnes10, agotnes08, agotnes16, balbiani15, vanditmarsch12, agotnes14} we restrict formulas which agents in a group or coalition can announce to formulas of \( \mathcal{L}_{EL}\).
This allows us to avoid circularity in the definition.

\subsection{Axiomatisation and Some Logical Properties}
In this section we present an axiomatisation of \(\mathbf{CoGAL}\) and show its soundness.

\begin{definition}
\label{def::axiomatisation}
\emph{Axiomatisation} of \(\mathbf{CoGAL}\) is a union of axiomatisation of \(\mathbf{GAL}\) \cite{agotnes10}, interaction axiom for group and coalition announcements A11, rule of inference for coalition announcements R6, and necessitation R4.

\(  \begin{array}[t]{ll}
    (A0) &\textrm{instantiations of propositional tautologies},  \\

    (A1) &K_a(\varphi \rightarrow \psi) \rightarrow (K_a \varphi \rightarrow K_a \psi), \\      

    (A2) &K_a\varphi \rightarrow \varphi, \\

        (A3) &K_a\varphi \rightarrow K_a K_a \varphi,  \\  

        (A4) &\neg K_a \varphi \rightarrow K_a \neg K_a \varphi,  \\

        (A5) &[\varphi]p \leftrightarrow (\varphi \rightarrow p),  \\

        (A6) &[\varphi] \neg \psi \leftrightarrow (\varphi \rightarrow \neg [\varphi] \psi),  \\

        (A7) &[\varphi](\psi \wedge \chi) \leftrightarrow ([\varphi]\psi \wedge [\varphi]\chi),\\

        (A8) &[\varphi]K_a\psi \leftrightarrow (\varphi \rightarrow K_a[\varphi]\psi), \\

        (A9) &[\varphi][\psi]\chi \leftrightarrow [\varphi \wedge [\varphi]\psi] \chi, \\       
    
\end{array}\)
\(
\begin{array}[t]{ll}




  	(A10) &[G]\varphi \rightarrow [\psi]\varphi, \textrm{ where } \psi \in  \mathcal{L}_{EL}^G,\\

 	(A11) &\langle \! [ G ] \! \rangle \varphi \rightarrow \langle G \rangle [ A \setminus G ] \varphi, \\

 	(R0) &\vdash \varphi, \varphi \rightarrow \psi \Rightarrow \, \vdash \psi, \\

	(R1) &\vdash \varphi \Rightarrow \, \vdash K_a \varphi, \\

	(R2) &\vdash \varphi \Rightarrow \, \vdash [\psi] \varphi, \\

	(R3) &\vdash \varphi \Rightarrow \, \vdash [G] \varphi, \\

	(R4) &\vdash \varphi \Rightarrow \, \vdash [ \! \langle G \rangle \! ] \varphi, \\

	(R5) & (\forall \psi {\in}  \mathcal{L}_{EL}^G \vdash \eta ([\psi]\varphi)) \Rightarrow \, \vdash \eta ([G]\varphi), \\

 	(R6) & (\forall \psi {\in}  \mathcal{L}_{EL}^G \ \exists \chi {\in}  \mathcal{L}_{EL}^{A \setminus G}\\
       	& \vdash \eta (\psi \rightarrow \langle \psi \wedge \chi \rangle \varphi)) \Rightarrow  \vdash \eta ([ \! \langle G \rangle \! ]\varphi).\\


\end{array}\)

\end{definition}

So, \(\mathbf{CoGAL}\) is the smallest subset of \(\mathcal{L}_{CoGAL}\) that contains all the axioms \(A0\) -- \(A11\) and closed under rules of inference \(R0\) -- \(R6\). Elements of \(\mathbf{CoGAL}\) are called \textit{theorems}. Note that \(R5\) and \(R6\) are infinitary rules: they require an infinite number of premises. Finding finite axiomatisations of any of \(\mathbf{APAL}\), \(\mathbf{GAL}\), or \(\mathbf{CAL}\) is an open problem. Note also that \(\mathbf{CoGAL}\) includes coalition logic \cite{pauly02}, that is all the axioms of the latter are validities of \(\mathbf{CoGAL}\) and a rule of inference preserves validity (see Appendix A).

\begin{definition} (Soundness and completeness)
	An axiomatisation is \textit{sound}, if for any formula \(\varphi\) of the language, it holds that \(\varphi \in \mathbf{CoGAL}\) implies \(\varphi\) is valid. And vice versa for \textit{completeness}.
\end{definition}

Soundness of \(A0\)--\(A4\), \(R0\), and \(R1\) is due to soundness of \(\mathbf{S5}\). Axioms \(A5\)--\(A9\) and rule of inference \(R3\) are sound, since \(\mathbf{PAL}\) is sound \cite{del}. Soundness of axiom \(A10\) and rules of inference \(R3\) and \(R5\) was shown in \cite{agotnes10}. 
We show soundness of \(R4\), \(R6\) in Proposition \ref{prop::soundness}, and validity of \(A11\) in Proposition \ref{prop::axiom}.

\begin{proposition}
\label{prop::soundness}
    \(R4\) and \(R6\) are sound, that is, they preserve validity.
\end{proposition}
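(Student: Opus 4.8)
The plan is to use the definition that a rule is sound when validity of all its premises entails validity of its conclusion, and to treat the two rules separately: \(R4\) is dispatched directly, and the effort is concentrated on the infinitary rule \(R6\).

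For \(R4\), suppose \(\varphi\) is valid; I must show \([\!\langle G \rangle\!]\varphi\) is valid. Fix an arbitrary pointed model \((M,w)\) and an arbitrary \(\psi \in \mathcal{L}_{EL}^G\). By the semantics of \([\!\langle G \rangle\!]\) it suffices to exhibit some \(\chi \in \mathcal{L}_{EL}^{A \setminus G}\) with \((M,w) \models \psi \rightarrow \langle \psi \wedge \chi \rangle \varphi\). I would take \(\chi := \bigwedge_{i \in A \setminus G} K_i \top\), which lies in \(\mathcal{L}_{EL}^{A \setminus G}\) and is valid, so that \(\psi \wedge \chi\) is equivalent to \(\psi\). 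If \((M,w) \models \psi\) then \((M,w) \models \psi \wedge \chi\), the updated model \((M^{\psi \wedge \chi}, w)\) is well defined since \(w\) survives the update, and \((M^{\psi \wedge \chi}, w) \models \varphi\) because \(\varphi\) is valid; hence \((M,w) \models \langle \psi \wedge \chi \rangle \varphi\) and the required implication holds. As \(\psi\) and \((M,w)\) were arbitrary, \([\!\langle G \rangle\!]\varphi\) is valid.

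For \(R6\), assume the premise is valid: for every \(\psi \in \mathcal{L}_{EL}^G\) there is a witness \(\chi_\psi \in \mathcal{L}_{EL}^{A \setminus G}\) such that \(\eta(\psi \rightarrow \langle \psi \wedge \chi_\psi \rangle \varphi)\) is valid. The decisive feature to exploit is that \(\chi_\psi\) depends only on \(\psi\) and not on the pointed model, and this uniformity is exactly what validity (rather than truth at a single point) buys us. I would prove, by induction on the structure of the necessity form \(\eta\), the following pointwise claim: for every pointed model \((M,w)\), if for each \(\psi\) there is \(\chi \in \mathcal{L}_{EL}^{A \setminus G}\) with \((M,w) \models \eta(\psi \rightarrow \langle \psi \wedge \chi \rangle \varphi)\), then \((M,w) \models \eta([\!\langle G \rangle\!]\varphi)\). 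The base case \(\eta = \sharp\) is immediate from the semantics of \([\!\langle G \rangle\!]\). In each inductive case, \(\eta = \varphi_0 \rightarrow \eta'\), \(\eta = K_a \eta'\), and \(\eta = [\varphi_0]\eta'\), I unfold the outer construct semantically, which either discharges an antecedent, passes to an \(a\)-accessible world \(v\), or passes to the updated model \(M^{\varphi_0}\); in every case the same witnesses \(\chi_\psi\) remain available at the new point, so the induction hypothesis applies to \(\eta'\). Feeding the valid premise into this claim, taking at each \((M,w)\) the global witness \(\chi_\psi\) for every \(\psi\), yields \((M,w) \models \eta([\!\langle G \rangle\!]\varphi)\) at every \((M,w)\), i.e. validity of the conclusion.

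The main obstacle is the inductive step for \(R6\), and specifically keeping the witnesses coherent as the necessity form is peeled off. If the premise only asserted truth of \(\eta(\psi \rightarrow \langle \psi \wedge \chi \rangle \varphi)\) at a single pointed model, the witness for a given \(\psi\) might have to change when we move to an accessible world or to an updated submodel, and the argument would collapse; it is precisely because \(\chi_\psi\) is a uniform, model-independent witness, guaranteed by validity, that it survives each semantic step. I therefore expect the bookkeeping of which \(\chi\) works where to be the only real subtlety, with the connective-by-connective unfolding itself being routine once the pointwise statement is set up correctly.
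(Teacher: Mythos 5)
Your proof is correct and follows essentially the same route as the paper's (Appendix~B, Proposition~\ref{prop::app1}): $R4$ is handled by observing that a valid formula survives any announcement, and $R6$ by induction on the structure of the necessity form $\eta$, peeling off $\varphi_0\rightarrow{}$, $K_a$, and $[\varphi_0]$ while carrying the witnesses $\chi_\psi$ through to the inner point. Your explicit witness $\chi=\bigwedge_{i\in A\setminus G}K_i\top$ for $R4$ is a welcome extra step of care, and the only quibble is that your closing remark is in tension with your own pointwise claim, whose inductive proof in fact goes through from truth at a single pointed model since the same witnesses transfer to accessible worlds and updated submodels.
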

\begin{proof}
A proof is given in Appendix B (Proposition \ref{prop::app1}).
 \end{proof}

Validity of \(A11\) was stated to be an open question in \cite{vanditmarsch12, agotnes16}. Informally, the idea of our proof is as follows. Let us examine the axiom: \(\langle \! [ G ] \! \rangle \varphi \rightarrow \langle G \rangle [ A \setminus G ] \varphi\). In the antecedent, all the agents make announcements simultaneously. In the consequent, the agents in \(A \setminus G\) know the announcement $\psi$ made by the agents in \(G\). 
In the updated model \((M^\psi,w)\) the agents in \(A \setminus G\) may learn some \emph{new} epistemic formulae \(\chi\) which they did not know before the announcement. We need to make sure that these new formulae cannot allow them to make \(\varphi\) false. However, since \(\psi\) is true in the initial model, and \(\chi\) in the updated one, agents in \(A \setminus G\) can always make an announcement that they know that after the announcement of \(\psi\), \(\chi\) holds.
This announcement, made simultaneously with the announcement by \(G\), `models' the effect of announcing $\chi$ later. Returning to our example (Figure \ref{fig::tark1}), whichever formulae \(\psi_1\) and \(\psi_2\) Ann and Bob announce, and whichever formula \(\varphi\) Cath learns afterwards, she can always announce \([\psi_1 \wedge \psi_2] K_c \varphi\) simultaneously with them in the initial situation. Informally, if after Bob's announcement of $\neg p$, 
Cath learns that $\neg p$, she can announce: `If you say that \(\neg p\) holds, then I will know it,' or $[\neg p] K_c \neg p$. We use this idea to prove that if the agents in $A \setminus G$ can prevent $\varphi$ after the announcement by $G$, then they could have prevented it before. 

Due to restriction of announcements to formulas of epistemic logic, we cannot directly employ public announcement operators in agents' `utterances.' In order to avoid this, we use the standard translation of $\mathbf{PAL}$ into epistemic logic.
\begin{definition}
\emph{Translation function} \(t:\mathcal{L}_{PAL} \rightarrow \mathcal{L}_{EL}\) \cite{del} is defined as follows:
\begin{center}
\(
\begin{array}[t] {lcl}
t(p) &= &p, \\

t(\neg \varphi) &= &\neg t(\varphi), \\

t(\varphi \wedge \psi) &= &t(\varphi) \wedge t(\psi), \\

t(K_a \varphi) &= &K_at(\varphi), \\

t([\varphi]p) &= &t(\varphi \rightarrow p),\\

\end{array}\)
\( \begin{array}[t]{lcl}

t([\varphi] \neg \psi) &= &t(\varphi \rightarrow \neg [\varphi]\psi),\\

t([\varphi](\psi \wedge \chi)) &= &t([\varphi] \psi \wedge [\varphi] \chi),\\

t([\varphi]K_a \psi) &= &t(\varphi \rightarrow K_a [\varphi]\psi),\\

t([\varphi][\psi]\chi) &= &t([\varphi \wedge [\varphi]\psi]\chi).\\
\end {array} \)
\end{center}

Every \(\varphi \in \mathcal{L}_{PAL}\) is equivalent to \(t(\varphi) \in \mathcal{L}_{EL}\).
\end{definition}

Now we show that for every announcement of agents' knowledge in some updated pointed model \((M^\psi,w)\) there is an equivalent announcement in the original one (i.e. in \((M,w)\)).

\begin{lemma}
\label{lemma::1}
    Let \(a, \mathellipsis, b \in A\). The following formula is valid for all \(\psi\), \(\chi_a, \mathellipsis,\) \(\chi_b \in \mathcal{L}_{EL}\):
$$[\psi \wedge K_a t([\psi]\chi_a) \wedge \mathellipsis \wedge K_b t([\psi]\chi_b)] \varphi \leftrightarrow [\psi][K_a \chi_a \wedge \mathellipsis \wedge K_b \chi_b]\varphi$$
\end{lemma}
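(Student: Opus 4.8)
The plan is to reduce the right-hand side to the left-hand side using the standard \(\mathbf{PAL}\) reduction axioms together with the principle that announcing provably equivalent formulas yields equivalent results. The crucial first tool is the composition axiom \(A9\), which collapses the sequential announcement on the right into a single announcement in the original model.

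First I would apply \(A9\) to the right-hand side \([\psi][K_a \chi_a \wedge \ldots \wedge K_b \chi_b]\varphi\), rewriting it as \([\psi \wedge [\psi](K_a \chi_a \wedge \ldots \wedge K_b \chi_b)]\varphi\). The task then reduces to showing that the announced formula \(\psi \wedge [\psi](K_a \chi_a \wedge \ldots \wedge K_b \chi_b)\) is equivalent to the announced formula on the left, \(\psi \wedge K_a t([\psi]\chi_a) \wedge \ldots \wedge K_b t([\psi]\chi_b)\), and then invoking the fact that equivalent announcements restrict a model to the same submodel, so replacement of equivalents is sound in the announcement position.

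Next I would simplify \([\psi](K_a \chi_a \wedge \ldots \wedge K_b \chi_b)\). By \(A7\) the announcement distributes over the conjunction, yielding \([\psi]K_a \chi_a \wedge \ldots \wedge [\psi]K_b \chi_b\), and by \(A8\) each conjunct \([\psi]K_i \chi_i\) rewrites to \(\psi \rightarrow K_i [\psi]\chi_i\). This is exactly where the translation function is needed: since \(\psi, \chi_i \in \mathcal{L}_{EL}\), the formula \([\psi]\chi_i\) lies in \(\mathcal{L}_{PAL}\) and is equivalent to \(t([\psi]\chi_i) \in \mathcal{L}_{EL}\), so replacing equivalents under \(K_i\) turns \(K_i [\psi]\chi_i\) into \(K_i t([\psi]\chi_i)\). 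Hence \([\psi](K_a \chi_a \wedge \ldots \wedge K_b \chi_b)\) is equivalent to \(\bigwedge_i (\psi \rightarrow K_i t([\psi]\chi_i))\).

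Finally, conjoining with \(\psi\) and applying the propositional equivalence \(\psi \wedge (\psi \rightarrow X) \leftrightarrow \psi \wedge X\) to each conjunct collapses \(\psi \wedge \bigwedge_i (\psi \rightarrow K_i t([\psi]\chi_i))\) to \(\psi \wedge \bigwedge_i K_i t([\psi]\chi_i)\), which is precisely the announced formula on the left. I expect the only genuinely delicate point to be justifying that replacement of provably equivalent formulas is legitimate both \emph{inside} the knowledge operator and \emph{in} the announcement position: the former follows from \(A1\) with necessitation \(R1\), and the latter from the semantic observation that the validity of \(\alpha \leftrightarrow \beta\) forces \(\llbracket \alpha \rrbracket_M = \llbracket \beta \rrbracket_M\) and hence \(M^\alpha = M^\beta\) in every model. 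Everything else is a routine chain of reduction-axiom rewrites.
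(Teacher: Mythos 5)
Your proposal is correct and is essentially the paper's own proof run in the opposite direction: the paper starts from the left-hand side and chains through propositional reasoning, the translation equivalence, $A8$, $A7$, and finally $A9$, while you start from the right-hand side and apply the same equivalences in reverse order. Your explicit justification of replacement of equivalents under $K_i$ and in announcement position is a welcome addition that the paper leaves implicit, but it does not change the substance of the argument.
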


\begin{proof}
    Suppose that for some pointed model \((M,w)\) it holds that \((M,w) \models [\psi \wedge K_a t([\psi]\chi_a) \wedge \mathellipsis \wedge K_b t([\psi]\chi_b)] \varphi\). 
    By propositional reasoning, it is equivalent to \((M,w) \models [\psi \wedge (\psi \rightarrow K_a t([\psi]\chi_a)) \wedge \mathellipsis \wedge (\psi \rightarrow K_b t([\psi]\chi_b))] \varphi\), and, by equivalence of a formula and its translation, the latter is equivalent to \((M,w) \models [\psi \wedge (\psi \rightarrow K_a [\psi]\chi_a) \wedge \mathellipsis \wedge (\psi \rightarrow K_b [\psi]\chi_b)] \varphi\). 
    By \(A8\), we have that \((M,w) \models [\psi \wedge [\psi] K_a\chi_a \wedge \mathellipsis \wedge [\psi] K_b\chi_b] \varphi\), and, by \(A7\), \((M,w) \models [\psi \wedge [\psi] (K_a\chi_a \wedge \mathellipsis \wedge K_b\chi_b)] \varphi\).
    Finally, by \(A9\), the latter is equivalent to \((M,w) \models [\psi] [K_a\chi_a \wedge \mathellipsis \wedge K_b \chi_b] \varphi\). 
\end{proof}

We use Lemma \ref{lemma::1} to show validity of axiom \(A11\).

\begin{proposition}
\label{prop::axiom}
    \(\langle \! [ G ] \! \rangle  \varphi \rightarrow \langle G \rangle [ A \setminus G ] \varphi\) is valid.
\end{proposition}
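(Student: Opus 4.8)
The plan is to unfold the dual semantics of both sides and then use Lemma \ref{lemma::1} to transfer an announcement made \emph{after} $\psi$ into one made \emph{simultaneously} with $\psi$. First I would assume \((M,w) \models \langle \! [ G ] \! \rangle \varphi\) and read off the witness: there is some \(\psi \in \mathcal{L}_{EL}^G\) with \((M,w) \models \psi\) such that for \emph{every} \(\chi \in \mathcal{L}_{EL}^{A \setminus G}\) we have \((M,w) \models [\psi \wedge \chi]\varphi\). The goal \((M,w) \models \langle G \rangle [ A \setminus G ] \varphi\) unfolds, via the diamond group-announcement semantics, to the existence of an announcement by \(G\) after which \([ A \setminus G ]\varphi\) holds.

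The natural move is to reuse the \emph{same} \(\psi\) as the witness for \(\langle G \rangle\). Since \((M,w) \models \psi\), it then suffices to show \((M,w) \models [\psi][ A \setminus G ]\varphi\), i.e.\ that \([ A \setminus G ]\varphi\) holds in the updated model \((M^\psi, w)\). Accordingly I would fix an arbitrary \(\chi' = \bigwedge_{i \in A \setminus G} K_i \chi_i \in \mathcal{L}_{EL}^{A \setminus G}\) and aim to prove \((M^\psi, w) \models [\chi']\varphi\), equivalently \((M,w) \models [\psi][\chi']\varphi\).

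This is exactly where Lemma \ref{lemma::1} does the work. I would set \(\chi := \bigwedge_{i \in A \setminus G} K_i\, t([\psi]\chi_i)\); because the translation \(t\) maps \(\mathcal{L}_{PAL}\) into \(\mathcal{L}_{EL}\), each conjunct is purely epistemic, so \(\chi \in \mathcal{L}_{EL}^{A \setminus G}\) is a legitimate simultaneous announcement by the outsiders. Instantiating the hypothesis at this particular \(\chi\) yields \((M,w) \models [\psi \wedge \chi]\varphi\), and Lemma \ref{lemma::1} converts this statement verbatim into \((M,w) \models [\psi][\chi']\varphi\). As \(\chi'\) was arbitrary, \((M^\psi, w) \models [ A \setminus G ]\varphi\), whence \((M,w) \models \psi \wedge [\psi][ A \setminus G ]\varphi\), which witnesses \(\langle G \rangle [ A \setminus G ]\varphi\) as required.

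The conceptual core — and the only place demanding care — is the reindexing step: an announcement \(\chi'\) the outsiders could make \emph{after learning} \(\psi\) is simulated by the simultaneous announcement \(\chi\) in the original model, where \(K_i\, t([\psi]\chi_i)\) encodes ``agent \(i\) knows that once \(\psi\) is announced, \(\chi_i\) holds,'' precisely the conditional promise described informally before the lemma. I expect the main obstacle to be bookkeeping rather than insight: verifying that \(\chi\) genuinely lies in \(\mathcal{L}_{EL}^{A \setminus G}\) (secured by the codomain of \(t\)) and that \(w\) survives the update to \(M^\psi\) (secured by \((M,w) \models \psi\)), after which the equivalence supplied by Lemma \ref{lemma::1} closes the argument directly.
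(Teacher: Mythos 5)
Your proof is correct and takes essentially the same approach as the paper's: the key step in both is to use Lemma \ref{lemma::1} together with the translation $t$ to turn an announcement by $A\setminus G$ made after $\psi$ into a simultaneous announcement $\bigwedge_{i} K_i\, t([\psi]\chi_i)$ in the original model. The only difference is presentational — you argue directly while the paper argues by contradiction.
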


\begin{proof}
    Suppose to the contrary that for some pointed model \((M,w)\) it holds that \((M,w) \models \langle \! [ G ] \! \rangle \varphi\) and \((M,w) \not \models \langle G \rangle [ A \setminus G ] \varphi\). 
    From \((M,w) \models \langle \! [ G ] \! \rangle \varphi\), by the semantics, 
$$\exists \psi {\in} \mathcal{L}_{EL}^G   \   \forall \chi_a, \mathellipsis \chi_b {\in} \mathcal{L}_{EL}:  
(M,w) \models \psi \wedge [ \psi \wedge K_a \chi_a \wedge \mathellipsis \wedge K_b \chi_b ] \varphi.$$ 
Let us call \(\psi_G\) the formula that $G$ can announce to enforce $\varphi$.
From \((M,w) \not \models \langle G \rangle [ A \setminus G ] \varphi\),
$$\forall \psi^\prime {\in} \mathcal{L}_{EL}^G   \   \exists \chi_a^\prime, \mathellipsis, \chi_b^\prime {\in} \mathcal{L}_{EL}: (M,w) \not \models \langle \psi^\prime \rangle [ K_a \chi_a^\prime \wedge \mathellipsis \wedge K_b \chi_b^\prime ] \varphi.$$
In particular, for $\psi' = \psi_G$,
$$\exists \chi_a^\prime, \mathellipsis, \chi_b^\prime {\in} \mathcal{L}_{EL}: (M,w) \not \models \langle \psi_G \rangle [ K_a \chi_a^\prime \wedge \mathellipsis \wedge K_b \chi_b^\prime ] \varphi.$$
Since $\psi_G$ is true in $(M,w)$, this is equivalent to
$$\exists \chi_a^\prime, \mathellipsis, \chi_b^\prime {\in} \mathcal{L}_{EL}: (M,w) \not \models [\psi_G] [ K_a \chi_a^\prime \wedge \mathellipsis \wedge K_b \chi_b^\prime ] \varphi.$$
    By Lemma \ref{lemma::1}, the latter is equivalent to 
$$\exists \chi_a^\prime, \mathellipsis, \chi_b^\prime \in \mathcal{L}_{EL}  : (M,w) \not \models [ \psi_G \wedge K_a t([\psi_G]\chi_a^\prime) \wedge \mathellipsis \wedge K_b t([\psi_G]\chi_b^\prime) ] \varphi.$$
Since  $t([\psi_G]\chi_a^\prime), \mathellipsis, t([\psi_G]\chi_b^\prime)$ are in $\mathcal{L}_{EL}$,   
we have the contradiction with
\[\forall \chi_a, \mathellipsis \chi_b {\in} \mathcal{L}_{EL}:
(M,w) \models \psi_G \wedge [\psi_G \wedge K_a \chi_a \wedge \mathellipsis \wedge K_b \chi_b ] \varphi.\qedhere\] 
\end{proof}
Proposition \ref{prop::axiom} allows us to prove Lindenbaum Lemma (Proposition \ref{prop::lindenbaum}) for \(\mathbf{CoGAL}\). But before that, let us show some properties of the logic. The following validity shows that if some formula \(\varphi\) can be achieved by two coalition announcements, it can be achieved by a single joint coalition announcement as well. The validity was known only for the case of group announcements in \(\mathbf{GAL}\) \cite{agotnes10}. We show that this also holds for coalition announcements.

\begin{proposition}
\label{prop::long}
    \(\langle \! [ G ] \! \rangle \langle \! [ H ] \! \rangle \varphi  \rightarrow \langle \! [ G \cup H ] \! \rangle \varphi\) is valid.
\end{proposition}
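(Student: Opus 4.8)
The plan is to argue semantically, reusing for coalition announcements the combination idea behind the group-announcement validity $\langle G \rangle \langle H \rangle \varphi \rightarrow \langle G \cup H \rangle \varphi$ of $\mathbf{GAL}$, and collapsing the two successive announcements into a single simultaneous one by means of Lemma \ref{lemma::1}, exactly as was done for axiom $A11$ in Proposition \ref{prop::axiom}. Assume $(M,w) \models \langle \! [ G ] \! \rangle \langle \! [ H ] \! \rangle \varphi$ and write $K = A \setminus (G \cup H)$. Unfolding the outer operator, I would fix a witness $\psi_G \in \mathcal{L}_{EL}^G$ with $(M,w) \models \psi_G$ and $(M,w) \models [\psi_G \wedge \chi]\langle \! [ H ] \! \rangle \varphi$ for every $\chi \in \mathcal{L}_{EL}^{A \setminus G}$. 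The goal is then to exhibit a single $\sigma \in \mathcal{L}_{EL}^{G \cup H}$, true at $w$, with $(M,w) \models [\sigma \wedge \theta]\varphi$ for all $\theta \in \mathcal{L}_{EL}^K$, which is precisely the witness required by the semantics of $\langle \! [ G \cup H ] \! \rangle \varphi$.

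Next I would instantiate the outer universal quantifier with the trivial announcement, obtaining $(M^{\psi_G},w) \models \langle \! [ H ] \! \rangle \varphi$ and hence a witness $\psi_H = \bigwedge_{h \in H} K_h \beta_h \in \mathcal{L}_{EL}^H$ that is chosen once and for all and defeats \emph{every} $A \setminus H$-response in $M^{\psi_G}$, i.e. $(M^{\psi_G},w) \models [\psi_H \wedge \chi']\varphi$ for all $\chi' \in \mathcal{L}_{EL}^{A \setminus H}$. I would then set $\sigma = \psi_G \wedge \bigwedge_{h \in H} K_h t([\psi_G]\beta_h)$, whose $G$-conjunct lies in $\mathcal{L}_{EL}^G$ and whose $H$-conjunct lies in $\mathcal{L}_{EL}^H$, so that $\sigma \in \mathcal{L}_{EL}^{G \cup H}$; its truth at $w$ follows from $A8$ together with the truth of $\psi_G$ and of $\psi_H$ in $M^{\psi_G}$. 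By Lemma \ref{lemma::1}, $\sigma$ is exactly the simultaneous announcement realising the sequential one ``$\psi_G$, then $\psi_H$''.

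The verification that this $\sigma$ also survives an arbitrary outer response $\theta \in \mathcal{L}_{EL}^K$ is where the genuine difficulty lies, and it is what is new compared with $\mathbf{GAL}$. The naive per-adversary move — route $\theta$ into the first round via $\chi = \theta$ and let $H$ respond — yields a winning announcement whose $H$-part depends on $\theta$, which is illegitimate since the witness $\sigma$ must be committed \emph{before} $\theta$ is seen; so the real task is to show that the single, already-fixed $\sigma$ is robust against all $\theta$. The obstruction is that $\psi_H$ was obtained in the updated model $M^{\psi_G}$ and defeats responses announced there, whereas $\theta$ is announced simultaneously in the original $M$, and $K$-knowledge is evaluated against different indistinguishability classes in the two models. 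The plan is to reconcile these with Lemma \ref{lemma::1} and the translation $t$: rewrite $[\sigma \wedge \theta]\varphi$ using $A7$ and $A9$ to peel off $\psi_G$, produce a $\chi' \in \mathcal{L}_{EL}^{A \setminus H}$ (legitimate since $K \subseteq A \setminus H$) that transports the effect of $\theta$ across the update, and then invoke the defeating property of $\psi_H$. I expect this transport of the adversary's announcement across the model update — guaranteeing that one committed $\sigma$ works against every $\theta$ although the inner witness was fixed in advance — to be the crux; the remainder is the bookkeeping of $(G \cup H)$-conjunctions and the routine $A7$–$A9$ manipulations already exercised in Lemma \ref{lemma::1}.
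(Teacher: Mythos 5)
Your proof stops exactly where the content of the proposition lies, and the route you sketch for the missing step does not go through. Having committed to $\sigma = \psi_G \wedge \bigwedge_{h\in H} K_h t([\psi_G]\beta_h)$, you must show $(M,w)\models[\sigma\wedge\theta]\varphi$ for an \emph{arbitrary} $\theta=\bigwedge_{j\in K}K_j\theta_j$. But Lemma \ref{lemma::1} together with the defeating property of $\psi_H$ in $M^{\psi_G}$ only yields $(M,w)\models[\sigma\wedge\bigwedge_{j}K_j t([\psi_G]\chi'_j)]\varphi$ for all $\chi'_j\in\mathcal{L}_{EL}$, i.e.\ you can only defeat adversary announcements that are themselves translations of announcements made \emph{inside} $M^{\psi_G}$. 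An arbitrary $K_j\theta_j$ announced in $M$ is in general not equivalent, even on $\llbracket\psi_G\rrbracket_M$, to any $K_j t([\psi_G]\chi'_j)$: every $[\psi_G]\chi'_j$ is (a conjunction of formulas) of the form $\psi_G\rightarrow\cdots$ and hence vacuously true at $j$-neighbours deleted by the update, whereas $K_j\theta_j$ still constrains those deleted worlds. Two surviving worlds that no announcement made inside $M^{\psi_G}$ can separate may well be separated by $K_j\theta_j$ via their deleted $j$-neighbours, so the adversary is strictly stronger announcing simultaneously in $M$ than announcing afterwards in $M^{\psi_G}$, and the ``transport across the update'' you are counting on does not exist. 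This is not bookkeeping; it is the point at which the argument breaks.

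You run into this because you instantiate the outer universal quantifier with $\top$ only, discarding the fact that $\langle \! [ G ] \! \rangle \langle \! [ H ] \! \rangle \varphi$ already quantifies universally over \emph{all} first-round responses by $A\setminus G\supseteq K$. The paper's proof (Proposition \ref{prop::app2}) keeps that quantifier: it splits $A\setminus G$ into $H\setminus G$ and $A\setminus(G\cup H)$, turns the universal over $H\setminus G$ (and over $G\setminus H$ in the second round) into an existential that is merged into the coalition's witness, but leaves the universal over $A\setminus(G\cup H)$ in place in \emph{both} rounds; it then collapses the two rounds using $A7$--$A9$ and the translation $t$, so that each agent's two contributions become a single epistemic announcement $K_i(\psi_i\wedge[\cdots]\chi_i)$, and an arbitrary single announcement $K_j\tau'_j$ by the adversary is recovered as the special case $\psi'_j=\tau'_j$, $\chi'_j=\top$ of the combined adversary announcement $K_j(\psi'_j\wedge[\cdots]\chi'_j)$. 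Your worry that routing $\theta$ into the first round makes $H$'s response depend on $\theta$ points at a genuine subtlety of composing coalition announcements, but the resolution is the paper's quantifier regrouping, not fixing $\psi_H$ against the trivial response and hoping to simulate $\theta$ afterwards; as it stands, your committed $\sigma$ need not defeat all of $K$'s options.
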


\begin{proof}
The proof is presented in Appendix B (Proposition \ref{prop::app2}).
\end{proof}

\begin{corollary}
     \(\langle \! [ G ] \! \rangle \langle \! [ G ] \! \rangle \varphi  \rightarrow \langle \! [ G ] \! \rangle \varphi\) is valid.    
\end{corollary}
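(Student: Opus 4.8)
The plan is to derive the corollary as an immediate special case of Proposition~\ref{prop::long} by setting $H = G$. Observe that $G \cup G = G$, so instantiating the coalitions $G$ and $H$ both to $G$ in the valid formula $\langle \! [ G ] \! \rangle \langle \! [ H ] \! \rangle \varphi \rightarrow \langle \! [ G \cup H ] \! \rangle \varphi$ yields precisely $\langle \! [ G ] \! \rangle \langle \! [ G ] \! \rangle \varphi \rightarrow \langle \! [ G ] \! \rangle \varphi$. Since Proposition~\ref{prop::long} is stated to hold for arbitrary coalitions $G, H \subseteq A$, and validity is preserved under such instantiation, the corollary follows at once.

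Concretely, I would write the proof in one or two lines. First I would recall that Proposition~\ref{prop::long} establishes the validity of $\langle \! [ G ] \! \rangle \langle \! [ H ] \! \rangle \varphi \rightarrow \langle \! [ G \cup H ] \! \rangle \varphi$ for all $G, H \subseteq A$ and all $\varphi \in \mathcal{L}_{CoGAL}$. Then I would simply take $H := G$ and use the set-theoretic identity $G \cup G = G$ to rewrite the consequent $\langle \! [ G \cup H ] \! \rangle \varphi$ as $\langle \! [ G ] \! \rangle \varphi$. Because the scheme in Proposition~\ref{prop::long} is universally quantified over the coalition parameters, this substitution is legitimate and produces a valid formula.

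There is essentially no obstacle here: the corollary is a direct specialisation, so the only content is the trivial observation that a set is idempotent under union with itself. The entire argument reduces to pointing at the more general proposition and performing the substitution, requiring no separate semantic or model-theoretic reasoning of its own.

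\begin{proof}
Immediate from Proposition~\ref{prop::long} by taking $H = G$ and noting that $G \cup G = G$.
\end{proof}
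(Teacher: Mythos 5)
Your proof is correct and matches the paper's intent: the corollary is stated there without proof precisely because it is the immediate instantiation $H := G$ of Proposition~\ref{prop::long}, using $G \cup G = G$. Nothing further is needed.
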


The other direction of Proposition \ref{prop::long} does not hold.  Whether  \(\langle \! [ G \cup H ] \! \rangle \varphi \rightarrow \langle \! [ G ] \! \rangle \langle \! [ H ] \! \rangle \varphi\) is valid was posed as an open question in \cite{agotnes16}. We settle this question by presenting a counterexample.

\begin{proposition}
    \(\langle \! [ G \cup H ] \! \rangle \varphi \rightarrow \langle \! [ G ] \! \rangle \langle \! [ H ] \! \rangle \varphi\) is not valid.
\end{proposition}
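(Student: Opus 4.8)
The plan is to refute the implication by constructing an explicit counterexample: a finite pointed model $(M,w)$ and an instantiation of $G$, $H$, $\varphi$ for which $(M,w)\models\langle\![G\cup H]\!\rangle\varphi$ but $(M,w)\not\models\langle\![G]\!\rangle\langle\![H]\!\rangle\varphi$. I would take the minimal nontrivial instance $A=\{a,b\}$, $G=\{a\}$, $H=\{b\}$, so that $A\setminus(G\cup H)=\emptyset$. With empty opposition the semantics of the antecedent collapses to the statement that the grand coalition $\{a,b\}$ has a single joint announcement $K_a\alpha\wedge K_b\beta$, true at $w$, after which $\varphi$ holds; this is the easiest possible form of antecedent to satisfy, so all the effort goes into forcing the consequent to be false. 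Since Proposition~\ref{prop::long} already gives the reverse implication, the nested form is at least as strong, and a counterexample must indeed keep the combined form true while breaking the nested one.

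The conceptual reason to expect failure is that in the nested modality $G$ and $H$ play \emph{adversarial} roles against each other across the two rounds. In $\langle\![\{a\}]\!\rangle\langle\![\{b\}]\!\rangle\varphi$ the outer opponents of $a$ are $A\setminus\{a\}=\{b\}$, while the inner opponents of $b$ are $A\setminus\{b\}=\{a\}$; thus $b$ must act as a spoiler in the first round yet as the protagonist enforcing $\varphi$ in the second. I would exploit the irreversibility of announcements: I want a model in which, once $a$ has committed to any truthful announcement $K_a\alpha$, the adversary $b$ can reply with some true $K_b\beta$ whose update shrinks the model so as to \emph{destroy} $b$'s own later ability to enforce $\varphi$ (so that $\langle\![\{b\}]\!\rangle\varphi$ fails in $M^{K_a\alpha\wedge K_b\beta}$), whereas a single simultaneous update jointly chosen by $a$ and $b$ avoids this collapse.

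The verification then splits into two tasks. First, exhibit the witnessing joint announcement and check, either directly on the model or via the reduction axioms $A5$--$A9$, that $[K_a\alpha\wedge K_b\beta]\varphi$ holds at $w$; this is a finite computation. Second, and this is the heart, show that for \emph{every} $\psi_a=K_a\alpha$ true at $w$ there is a reply $\chi_b=K_b\beta$ true at $w$ with $(M^{\psi_a\wedge\chi_b},w)\not\models\langle\![\{b\}]\!\rangle\varphi$. To make this quantification over all epistemic formulas tractable I would choose $M$ to be finite and bisimulation-rigid (no two distinct states bisimilar), and use the observation that $\llbracket K_a\alpha\rrbracket=\bigcup\{[u]_a:[u]_a\subseteq\llbracket\alpha\rrbracket_M\}$ is always a union of $a$-equivalence classes, while conversely, in a rigid finite model every such union is definable by some $\alpha\in\mathcal{L}_{EL}$. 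Hence the distinct effects of $a$'s announcements are exactly the finitely many submodels obtained by restricting to a union of $a$-classes that contains the class of $w$, and similarly for $b$; the whole nested game thereby reduces to a finite case analysis.

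The main obstacle is precisely this case analysis together with the design of $M$: the model must be rigid and small enough that enumerating the unions of $a$-classes (and, inside each resulting submodel, the unions of $b$-classes) is manageable, yet rich enough that the simultaneous joint move reaches a $\varphi$-state while every sequential play lets the intermediate spoiler dismantle the protagonist's follow-up. I would also have to respect the truthfulness conditions in the coalition semantics—the coalition's announcement must hold at $w$, and an opponent's announcement only bites when it too is true at $w$—so that vacuously satisfied updates are not mistaken for genuine spoiling moves. Verifying that no announcement by $a$ escapes $b$'s response, across all cases, is where the real content of the counterexample lies.
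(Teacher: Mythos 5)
There is a genuine gap: what you have written is a plan for a counterexample, not a counterexample. The proposition is an existential claim (``there exist a model and a formula refuting the implication''), so its entire content is the concrete witness together with the verification, and you defer exactly that (``the design of $M$\ldots is where the real content of the counterexample lies''). Nothing in the proposal establishes that a model with the properties you describe actually exists; in particular, the tension you rely on --- that $b$ can always spoil in round one in a way that cripples $b$'s own round-two powers, while a single simultaneous announcement succeeds --- is precisely the thing that needs to be exhibited and checked, and it is not obvious a priori that it can be realised at all, let alone in your minimal setting $A=\{a,b\}$ with $A\setminus(G\cup H)=\emptyset$. The paper's proof supplies exactly what you omit: three agents, the goal formula $\varphi:=K_b(p\wedge q\wedge r)\wedge\neg K_a(p\wedge q\wedge r)\wedge\neg K_c(p\wedge q\wedge r)$, a specific model, a witnessing joint announcement ($K_a q\wedge K_b\top$) for the antecedent, and the universal spoiling reply $K_b p\wedge K_c\top$ which irreversibly teaches $a$ the conjunction $p\wedge q\wedge r$ and hence falsifies $\varphi$ in every subsequent submodel. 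Your conceptual diagnosis (irreversibility of announcements; $b$'s dual role as outer adversary and inner protagonist) does match the mechanism the paper exploits, but a correct diagnosis is not a proof.

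Two further cautions if you carry the plan out. First, your reduction of the quantification over $\mathcal{L}_{EL}$ to unions of equivalence classes is sound for the \emph{first} round on a finite bisimulation-minimal $M$, but after an update the restricted model need not remain bisimulation-minimal, so the inner quantification over $b$'s and $a$'s announcements in $M^{\psi_a\wedge\chi_b}$ needs a separate argument (the paper sidesteps this by arguing semantically that once $K_a(p\wedge q\wedge r)$ holds it holds in every further restriction, so no enumeration of the inner round is needed). Second, dropping the third agent changes the problem: in the paper's example the antecedent is only nontrivial because $c\in A\setminus(G\cup H)$ must be unable to spoil, and $\varphi$ mentions $c$'s ignorance; with $A=G\cup H$ you would be proving a different (if still sufficient) instance, and you have given no evidence that a two-agent counterexample exists.
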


\begin{proof}
    Let \(G = \{a\}, H = \{b\}\), and \(\varphi:= K_b(p \wedge q \wedge r) \wedge \neg K_a(p \wedge q \wedge r) \wedge \neg K_c(p \wedge q \wedge r)\). Informally, \(\varphi\) says that agent \(b\) knows that the given propositional variables are true, and agents \(a\) and \(c\) do not. Consider the model \(M\) in Figure \ref{fig:counter} (reflexive arrows are omitted for convenience).
    By the semantics, \((M,\blacksquare) \models \langle \! [ \{ a, b \} ] \! \rangle \varphi\) iff \(\exists (K_a\psi \wedge K_b\chi)\), \(\psi, \chi \in \mathcal{L}_{EL}\), \(\forall K_c \tau\), \(\tau \in \mathcal{L}_{EL}: (M,\blacksquare) \models K_a\psi \wedge K_b\chi \wedge [ K_a\psi \wedge K_b\chi \wedge K_c \tau ] \varphi\). Let \(\psi\) be \(q\), and \(\chi\) be \(\top\). 
    Observe that \((M,\blacksquare) \models K_a q \wedge K_b \top\).
    Moreover, \(c\) does not know any formula that she can announce to avoid \(\varphi\). An informal argument is as follows. Whatever \(c\) announces in this situation, she cannot avoid \(b\) learning \(p \wedge q \wedge r\). In order to make \(a\) learn that \(p \wedge q \wedge r\), \(c\) has to announce something of the form \(\psi \rightarrow p\), since she does not know the value of \(p\) herself.
    Formula \(\psi\) can be neither \(r\) nor \(q\), because \(c\) does not know their truth values. Also, it cannot be a statement of \(b\)'s knowledge, since in every \(q\)-world accessible by \(c\), \(b\)'s knowledge is only a reflexive arrow. 
    It cannot be \(a\)'s or \(c\)'s knowledge either, since in this case \(a\) would have known \(p\) herself, and \(c\)'s relation between \(q\)-states is universal.
    
    In the consequent, we have \((M, \blacksquare) \models \langle \! [ a ] \! \rangle \langle \! [ b ] \! \rangle \varphi\). By the semantics, \(\exists \psi {\in} \mathcal{L}_{EL}^a\) \ \(\forall \chi {\in} \mathcal{L}_{EL}^{b\cup c}\): \((M, \blacksquare) \models \psi \wedge [ \psi \wedge \chi ] \langle \! [ b ] \! \rangle \varphi\).
   Let us fix such a \(\psi\), and let \(\chi := K_b p \wedge K_c \top\). Then \((M, \blacksquare) \models \psi \wedge [ \psi \wedge K_b p \wedge K_c \top ] \langle \! [ b ] \! \rangle \varphi\).
    Observe that no matter what \(a\) announces, \(K_b p\) `forces' \(a\) to learn that \(p \wedge q \wedge r\), and whatever is announced in the updated model \((M^{\psi \wedge K_b p \wedge K_c \top }, \blacksquare)\), \(a\)'s knowledge of \(p \wedge q \wedge r\) and, hence, falsity of \(\varphi\) remains. So, \((M, \blacksquare) \not \models  \langle \! [ a ] \! \rangle \langle \! [ b ] \! \rangle \varphi\).
\end{proof}

\begin{figure}[h]
    \centering
    \includegraphics {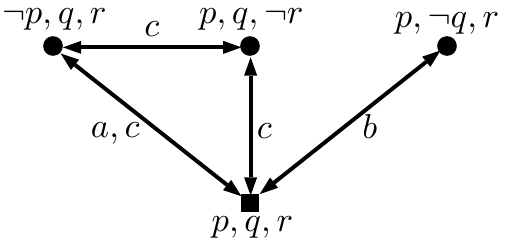}
    \caption{A counterexample}
    \label{fig:counter}
\end{figure}

\subsection{Completeness}
In order to prove completeness of \(\mathbf{CoGAL}\), we expand and modify the completeness proof for \(\mathbf{APAL}\) \cite{balbiani08, balbiani15, balbiani15a}. Although the proof is partially based upon the classic canonical model approach, we have to ensure that construction of maximal consistent theories (Proposition \ref{prop::lindenbaum}) allows us to include infinite amount of formulas for cases of coalition announcements. This corresponds to `whatever other agents may say.' But before we start, let us state two auxiliary lemmas.

\begin{lemma}
\label{lemma::2}
	Let \(\varphi, \psi \in \mathcal{L}_{CoGAL}\). If \(\varphi \rightarrow \psi\) is a theorem, then \(\eta (\varphi) \rightarrow \eta (\psi)\) is a theorem as well.
\end{lemma}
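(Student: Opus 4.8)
The plan is to prove the statement by induction on the structure of the necessity form $\eta$, following its grammar $\eta ::= \sharp \mid \chi \to \eta'(\sharp) \mid K_a \eta'(\sharp) \mid [\chi]\eta'(\sharp)$, while keeping the hypothesis $\vdash \varphi \to \psi$ fixed throughout. In the base case $\eta = \sharp$ we have $\eta(\varphi) = \varphi$ and $\eta(\psi) = \psi$, so $\eta(\varphi) \to \eta(\psi)$ is literally the hypothesis. For the inductive step I fix an inner form $\eta'$ together with the induction hypothesis $\vdash \eta'(\varphi) \to \eta'(\psi)$ and consider the three ways $\eta$ can be built from $\eta'$.

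The implication constructor $\eta = (\chi \to \eta')$ is purely propositional: $\eta(\varphi) \to \eta(\psi)$ is $(\chi \to \eta'(\varphi)) \to (\chi \to \eta'(\psi))$, which follows from the induction hypothesis together with the tautology $(\theta_1 \to \theta_2) \to ((\chi \to \theta_1) \to (\chi \to \theta_2))$ (an instance of $A0$) by $R0$. The knowledge constructor $\eta = K_a \eta'$ uses normality of $K_a$: necessitation $R1$ turns $\vdash \eta'(\varphi) \to \eta'(\psi)$ into $\vdash K_a(\eta'(\varphi) \to \eta'(\psi))$, and then the distribution axiom $A1$ with $R0$ gives $\vdash K_a \eta'(\varphi) \to K_a \eta'(\psi)$. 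The announcement constructor $\eta = [\chi]\eta'$ is analogous in spirit but is the delicate one: it requires the derived \emph{monotonicity rule} for public announcements, $\vdash \alpha \to \beta \Rightarrow \vdash [\chi]\alpha \to [\chi]\beta$, applied to $\alpha = \eta'(\varphi)$ and $\beta = \eta'(\psi)$.

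So the real work is to establish this monotonicity rule for \emph{arbitrary} $\alpha, \beta \in \mathcal{L}_{CoGAL}$. The positive half is routine: from $\vdash \alpha \to \beta$, rule $R2$ gives $\vdash [\chi](\alpha \to \beta)$, and rewriting with the reduction axioms $A6$ and $A7$ (reading $\alpha \to \beta$ as $\neg(\alpha \wedge \neg\beta)$) yields $\vdash \chi \to ([\chi]\alpha \to [\chi]\beta)$. What remains is the vacuous half, $\vdash \neg\chi \to [\chi]\beta$ for every $\beta$, which is exactly what lets us discharge the leading $\chi \to {}$ and conclude $\vdash [\chi]\alpha \to [\chi]\beta$.

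I expect this vacuous half to be the main obstacle. It cannot be read off the reduction axioms unless $\beta$ already has a convenient outermost connective, so I would prove it by induction on $\beta$: the atomic, negation, conjunction and $K_a$ cases come straight from $A5$--$A8$, and the $[\psi]\beta'$ case from $A9$ after noting that $\neg\chi$ also falsifies the composed precondition $\chi \wedge [\chi]\psi$. The genuinely awkward subcases are those where $\beta$ has an outermost group operator $[G]$ or coalition operator $[ \! \langle G \rangle \! ]$, since there is no reduction axiom to push $[\chi]$ inside them. Here I would exploit that $\neg\chi \to [\chi](\sharp)$ is itself a necessity form and appeal to the infinitary rules $R5$ and $R6$: their conclusions have precisely the shapes $\eta_0([G]\beta')$ and $\eta_0([ \! \langle G \rangle \! ]\beta')$ for $\eta_0(\sharp) = \neg\chi \to [\chi]\sharp$, so it suffices to prove the corresponding premises, which only involve public announcements $[\psi]\beta'$ (with $\psi \in \mathcal{L}_{EL}^G$) and thus fall under the already-treated cases. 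Making this termination rigorous relies on the same complexity ordering used in the $\mathbf{APAL}$ completeness proof, under which the group and coalition operators outrank their public-announcement unfoldings; I anticipate that setting up this ordering, rather than any single derivation, will be the fiddliest point.
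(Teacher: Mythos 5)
Your proposal is correct and takes the same route as the paper, whose entire proof is the single line ``a simple induction on the complexity of \(\eta\)''; you carry out exactly that induction. You also correctly isolate and justify the one ingredient the paper leaves implicit, namely the derived monotonicity rule \(\vdash \alpha \rightarrow \beta \Rightarrow \vdash [\chi]\alpha \rightarrow [\chi]\beta\) for arbitrary \(\alpha, \beta \in \mathcal{L}_{CoGAL}\), whose vacuous half \(\vdash \neg\chi \rightarrow [\chi]\beta\) indeed needs an inner induction on \(\beta\) using the reduction axioms and, for the \([G]\) and \([ \! \langle G \rangle \! ]\) cases, the infinitary rules \(R5\) and \(R6\) applied to the necessity form \(\neg\chi \rightarrow [\chi]\sharp\) together with the paper's well-founded complexity order.
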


\begin{proof}
    A simple induction on the complexity of \(\eta\).
\end{proof}

\begin{lemma}
\label{lemma::3}
    \([\chi] \langle \psi \rangle \varphi\) \(\rightarrow\) \([\chi] \langle G \rangle \varphi\), where \(\chi \in \mathcal{L}_{CoGAL}\), and \(\psi \in \mathcal{L}_{EL}^G\), is a theorem.
\end{lemma}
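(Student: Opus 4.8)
The plan is to split the claim into two parts: first establish the ``un-prefixed'' implication \(\langle \psi \rangle \varphi \rightarrow \langle G \rangle \varphi\) as a theorem, and then carry it under the announcement \([\chi]\) by exploiting the fact that \([\chi]\sharp\) is itself a necessity form, so that Lemma~\ref{lemma::2} applies. This two-step shape mirrors how one typically propagates a validity through a guarded modality in this axiomatisation.

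For the first part, I would observe that the un-prefixed implication is just the dual of axiom \(A10\). Instantiating \(A10\) with \(\neg \varphi\) in place of \(\varphi\) yields the theorem \([G]\neg\varphi \rightarrow [\psi]\neg\varphi\), which is an axiom instance precisely because \(\psi \in \mathcal{L}_{EL}^G\). Taking the contrapositive by propositional reasoning (\(A0\), \(R0\)) gives \(\neg[\psi]\neg\varphi \rightarrow \neg[G]\neg\varphi\), and by the defining abbreviations of the dual operators \(\langle \psi \rangle\) and \(\langle G \rangle\) this is exactly \(\langle \psi \rangle \varphi \rightarrow \langle G \rangle \varphi\).

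For the second part, I would note that \([\chi]\sharp\) is obtained from the grammar of necessity forms via the clause \([\varphi]\eta(\sharp)\) with inner form \(\eta(\sharp) = \sharp\) and announced formula \(\chi \in \mathcal{L}_{CoGAL}\); hence \(\eta(\sharp) := [\chi]\sharp\) is a legitimate necessity form. Applying Lemma~\ref{lemma::2} to the theorem \(\langle \psi \rangle \varphi \rightarrow \langle G \rangle \varphi\) with this \(\eta\) produces \(\eta(\langle \psi \rangle \varphi) \rightarrow \eta(\langle G \rangle \varphi)\), i.e.\ \([\chi]\langle \psi \rangle \varphi \rightarrow [\chi]\langle G \rangle \varphi\), which is the desired theorem.

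The step deserving the most care—and where a more direct attempt would stumble—is the passage under \([\chi]\). Since \([\chi]\) is a \emph{partial} modality (the reduction axioms \(A5\)--\(A9\) make it behave as a precondition-guarded operator rather than a normal \(\mathbf{K}\) box), one cannot simply invoke \(\mathbf{K}\)-distribution to push an implication through \([\chi]\); moreover \(\langle G \rangle\) has no reduction axiom at all, so unfolding is not an option. The monotonicity we actually need is exactly what Lemma~\ref{lemma::2} delivers through the necessity-form machinery, so the whole difficulty is absorbed by the single observation that \([\chi]\sharp\) is a necessity form, and no separate distribution argument is required.
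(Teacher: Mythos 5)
Your proof is correct and its first half is exactly the paper's: instantiate \(A10\) with \(\neg\varphi\), contrapose, and read off \(\langle \psi \rangle \varphi \rightarrow \langle G \rangle \varphi\) via the dual abbreviations. The only divergence is in how you lift this under \([\chi]\): you observe that \([\chi]\sharp\) is a necessity form and apply Lemma~\ref{lemma::2}, whereas the paper applies \(R2\) to obtain \([\chi](\langle\psi\rangle\varphi \rightarrow \langle G\rangle\varphi)\) and then distributes \([\chi]\) over the implication. These are not genuinely different arguments: your closing remark that one ``cannot simply invoke \(\mathbf{K}\)-distribution to push an implication through \([\chi]\)'' overstates the obstacle, since distribution of \([\chi]\) over implication is a derivable principle here (it does not require a reduction axiom for \(\langle G\rangle\); it can be established for arbitrary \(\mathcal{L}_{CoGAL}\) formulas from \(A5\)--\(A9\) together with \(R5\) and \(R6\) for the quantified cases), and indeed the inductive proof of Lemma~\ref{lemma::2} must itself go through exactly this distribution step in its \([\chi]\eta(\sharp)\) clause. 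So your route does not avoid the difficulty; it merely packages it inside Lemma~\ref{lemma::2}. That said, delegating to the already-proved Lemma~\ref{lemma::2} is a perfectly legitimate and arguably tidier way to finish, and the resulting proof is sound.
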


\begin{proof}
    Let \(\psi \in \mathcal{L}_{EL}^G\). By \(A10\), \([G] \neg \varphi \rightarrow [\psi] \neg \varphi\) is a theorem.
    By contraposition, we have that \(\langle \psi \rangle \varphi\) \(\rightarrow\) \(\langle G \rangle \varphi\) is also a theorem. By \(R2\) and distribution over implication, we infer  \([\chi] \langle \psi \rangle \varphi\) \(\rightarrow\) \([\chi] \langle G \rangle \varphi\). 
\end{proof}
Now, the first part of the proof up to Proposition \ref{prop::lindenbaum} is based on \cite{balbiani08}.
\begin{definition}
    A set of formulae \(x\) is called a \emph{theory} if and only if it contains \(\mathbf{CoGAL}\), and is closed under \(R0, R5,\) and \(R6\). A theory \(x\) is consistent if and only if \(\bot \not \in x\), and is maximal if and only if for all \(\varphi \in \mathcal{L}_{CoGAL}\) it holds that either \(\varphi \in x\) or \(\neg \varphi \in x\). 
\end{definition}

\begin{proposition}
\label{prop::mct}
    Let \(x\) be a theory, \(\varphi, \psi \in \mathcal{L}_{CoGAL}\), and \(a \in A\). The following are theories: \(x + \varphi = \{\psi: \varphi \rightarrow \psi \in x\}, K_a x = \{\varphi: K_a \varphi \in x\}\), and \([\varphi]x = \{\psi: [\varphi]\psi \in x\}\).
\end{proposition}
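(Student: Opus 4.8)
The plan is to check, for each of the three sets $Y \in \{x + \varphi,\ K_a x,\ [\varphi]x\}$, the three defining conditions of a theory: that $\mathbf{CoGAL} \subseteq Y$, and that $Y$ is closed under $R0$, $R5$, and $R6$. Each set has the uniform shape $Y = \{\theta : f(\theta) \in x\}$ for a ``guard'' $f$, namely $f(\theta) = \varphi \to \theta$ for $x + \varphi$, $f(\theta) = K_a \theta$ for $K_a x$, and $f(\theta) = [\varphi]\theta$ for $[\varphi]x$. I would exploit this common shape to treat the three cases in parallel.

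For $\mathbf{CoGAL} \subseteq Y$: if $\chi$ is a theorem then so is $f(\chi)$ --- for $x+\varphi$ because $\chi \to (\varphi \to \chi)$ is a propositional tautology and $R0$ applies, for $K_a x$ by necessitation $R1$, and for $[\varphi]x$ by $R2$. Since $\mathbf{CoGAL} \subseteq x$, we get $f(\chi) \in x$, hence $\chi \in Y$. For closure under $R0$: assuming $\theta, \theta \to \rho \in Y$, i.e. $f(\theta), f(\theta \to \rho) \in x$, I would use the appropriate distribution principle $f(\theta \to \rho) \to (f(\theta) \to f(\rho))$, which is a propositional tautology for the implication guard, axiom $A1$ for the $K_a$ guard, and the derivable $\mathbf{PAL}$ normality theorem $[\varphi](\theta \to \rho) \to ([\varphi]\theta \to [\varphi]\rho)$ for the announcement guard; two applications of $R0$ inside $x$ then yield $f(\rho) \in x$, i.e. $\rho \in Y$.

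The heart of the argument is closure under the two infinitary rules $R5$ and $R6$, and here the key observation is that each guard $f$ is itself one of the clauses in the grammar of necessity forms: $\varphi \to \eta(\sharp)$, $K_a\eta(\sharp)$, and $[\varphi]\eta(\sharp)$ are all necessity forms whenever $\eta(\sharp)$ is. Consequently, if $\eta$ is any necessity form then so is $\eta'$ defined by $\eta'(\sharp) = f(\eta(\sharp))$. To verify $R5$-closure of $Y$: suppose $\eta([\psi]\chi) \in Y$ for every $\psi \in \mathcal{L}_{EL}^G$; unfolding, this says $\eta'([\psi]\chi) = f(\eta([\psi]\chi)) \in x$ for every such $\psi$, which is exactly the premise of $R5$ applied to the necessity form $\eta'$ in $x$. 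Since $x$ is closed under $R5$, we obtain $\eta'([G]\chi) \in x$, i.e. $f(\eta([G]\chi)) \in x$, i.e. $\eta([G]\chi) \in Y$. Closure under $R6$ is identical: the hypothesis that for every $\psi \in \mathcal{L}_{EL}^G$ there is $\chi \in \mathcal{L}_{EL}^{A \setminus G}$ with $\eta(\psi \to \langle \psi \wedge \chi \rangle \varphi) \in Y$ translates, via $\eta'(\sharp) = f(\eta(\sharp))$, into the premise of $R6$ for $\eta'$ in $x$, and $R6$-closure of $x$ then gives $\eta'([ \! \langle G \rangle \! ]\varphi) \in x$, that is $\eta([ \! \langle G \rangle \! ]\varphi) \in Y$. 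Because $f$ is applied uniformly, all three sets are handled by the same reasoning.

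The only step that is not pure bookkeeping is the normality theorem for the public announcement operator needed in the $R0$-closure of $[\varphi]x$, where the propositional tautology and $A1$ have no direct analogue. I expect this to be the one place requiring care: I would either invoke the standard fact that the announcement modalities of $\mathbf{PAL}$ are normal \cite{del}, or derive $[\varphi](\theta \to \rho) \to ([\varphi]\theta \to [\varphi]\rho)$ from $A6$ and $A7$ together with the separately derivable theorem $\neg\varphi \to [\varphi]\chi$, which supplies the vacuous case when $\varphi$ is false. All remaining obligations reduce to propositional reasoning inside $x$ and to the syntactic observation that prefixing a necessity form by any guard again produces a necessity form.
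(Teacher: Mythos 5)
Your proposal is correct and follows essentially the same route as the paper: the paper's proof likewise reduces $R5$- and $R6$-closure of $x+\varphi$, $K_a x$, and $[\varphi]x$ to the observation that prefixing a necessity form with $\varphi\rightarrow$, $K_a$, or $[\varphi]$ yields another necessity form, so the premises of the rule transfer from the derived set back to $x$ (the paper just writes out the three cases separately instead of parametrising by a guard $f$, and delegates the containment of $\mathbf{CoGAL}$ and $R0$-closure to the cited earlier work, which you instead spell out). Your extra care about the normality theorem $[\varphi](\theta\rightarrow\rho)\rightarrow([\varphi]\theta\rightarrow[\varphi]\rho)$ is well placed but unproblematic, as it is a standard derivable fact of $\mathbf{PAL}$.
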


\begin{proof}
The proof is an extension of the one from \cite{balbiani08} (see Appendix B, Propostion \ref{prop::app3}). \qedhere
\end{proof}

\begin{proposition}
\label{prop::consistency}
Let \(\varphi \in \mathcal{L}_{CoGAL}\). Then, \(\mathbf{CoGAL} + \varphi\) is consistent iff \(\neg \varphi \not \in \mathbf{CoGAL}\).
\end{proposition}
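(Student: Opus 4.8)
The plan is to unfold the two definitions involved and reduce everything to Proposition \ref{prop::mct} together with propositional reasoning inside the system; there is no substantial mathematical content beyond that already established.

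First I would observe that \(\mathbf{CoGAL}\) is itself a theory. It trivially contains \(\mathbf{CoGAL}\), and by its definition as the smallest subset of \(\mathcal{L}_{CoGAL}\) closed under the rules \(R0\)--\(R6\), it is in particular closed under \(R0\), \(R5\), and \(R6\). Hence Proposition \ref{prop::mct} applies with \(x = \mathbf{CoGAL}\), and \(\mathbf{CoGAL} + \varphi\) is again a theory. In particular it is meaningful to ask whether it is consistent in the sense defined for theories.

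Next I would unwind the two relevant definitions. By the definition of consistency, \(\mathbf{CoGAL} + \varphi\) is consistent iff \(\bot \notin \mathbf{CoGAL} + \varphi\). Instantiating the definition \(x + \varphi = \{\psi : \varphi \rightarrow \psi \in x\}\) at \(x = \mathbf{CoGAL}\) gives \(\bot \in \mathbf{CoGAL} + \varphi\) iff \(\varphi \rightarrow \bot \in \mathbf{CoGAL}\), that is, iff \(\varphi \rightarrow \bot\) is a theorem. Finally I would close the gap between \(\varphi \rightarrow \bot\) and \(\neg \varphi\): since \(\mathbf{CoGAL}\) contains all instances of propositional tautologies (\(A0\)) and is closed under modus ponens (\(R0\)), the formulas \(\varphi \rightarrow \bot\) and \(\neg \varphi\) are interderivable, so one is a theorem exactly when the other is. Chaining the three equivalences yields that \(\mathbf{CoGAL} + \varphi\) is consistent iff \(\neg \varphi \notin \mathbf{CoGAL}\), as required.

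I do not expect a genuine obstacle here. The entire weight of the argument---that \(x + \varphi\) remains closed under the infinitary rules \(R5\) and \(R6\)---is already discharged by Proposition \ref{prop::mct}. The only point that needs a moment's care is verifying that \(\mathbf{CoGAL}\) itself qualifies as a theory, so that both Proposition \ref{prop::mct} and the definition of consistency may legitimately be invoked; the remainder is a routine unfolding of definitions together with propositional derivability within the calculus.
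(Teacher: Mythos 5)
Your proof is correct and follows essentially the same route as the paper's: both reduce the claim to the definition of \(x + \varphi\) (so that \(\bot \in \mathbf{CoGAL} + \varphi\) iff \(\varphi \rightarrow \bot \in \mathbf{CoGAL}\)) and then use \(A0\) and \(R0\) to exchange \(\varphi \rightarrow \bot\) for \(\neg\varphi\). Your presentation as a single chain of equivalences, with the explicit check that \(\mathbf{CoGAL}\) is itself a theory so that Proposition \ref{prop::mct} applies, is if anything slightly more careful than the paper's two-directional argument.
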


\begin{proof}
The proof is given in Appendix B (Proposition \ref{prop::balbiani}). \qedhere
\end{proof} 

The following proposition is a variation of Lindenbaum Lemma. Validity of axiom \(A11\) allows us to expand the corresponding proof for \(\mathbf{APAL}\), and to deal with having two different quantifiers at the same time.

\begin{proposition}
\label{prop::lindenbaum}
    Every consistent theory \(x\) can be extended to a maximal consistent theory \(y\).
\end{proposition}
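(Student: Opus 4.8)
The plan is to adapt the Lindenbaum construction used for \(\mathbf{APAL}\) in \cite{balbiani08}. Since \(\mathcal{L}_{CoGAL}\) is countable, enumerate its formulas as \(\varphi_0, \varphi_1, \ldots\) and build an increasing chain of consistent theories \(x = x_0 \subseteq x_1 \subseteq \cdots\) whose union \(y = \bigcup_n x_n\) will be the required maximal consistent extension. At stage \(n\) I decide \(\varphi_n\): because \(x_n\) is a consistent theory, at least one of \(x_n + \varphi_n\), \(x_n + \neg\varphi_n\) is consistent (otherwise both \(\varphi_n\) and \(\neg\varphi_n\) would lie in \(x_n\)), and by Proposition \ref{prop::mct} each \(x + \theta\) is again a theory, so I may adjoin the consistent one. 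The only delicate issue is that \(y\) must stay closed under the infinitary rules \(R5\) and \(R6\). Consequently, whenever the formula I adjoin is a negated group box \(\neg\eta([G]\psi)\) or a negated coalition box \(\neg\eta([ \! \langle G \rangle \! ]\psi)\), I must simultaneously add a \emph{witness} keeping a premise of the relevant rule false.

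The group case is routine. I will prove that if \(x_n + \neg\eta([G]\psi)\) is consistent, then there is \(\chi \in \mathcal{L}_{EL}^G\) with \(x_n + \neg\eta([G]\psi) + \neg\eta([\chi]\psi)\) consistent: if this failed for every \(\chi\), then \(\eta([\chi]\psi)\) would belong to the theory \(x_n + \neg\eta([G]\psi)\) for all \(\chi\), and closure under \(R5\) would force \(\eta([G]\psi)\) into it, contradicting consistency. So each time I add \(\neg\eta([G]\psi)\) I also add such a witness. Note that a witness carries a public-announcement box \([\chi]\) at the \(\sharp\)-position, so it is never itself of the shape that triggers further group or coalition witnessing; the construction therefore does not cascade.

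The coalition case is the main obstacle, because \(R6\) has the quantifier alternation \(\forall\psi'\,\exists\chi\): to keep \(\neg\eta([ \! \langle G \rangle \! ]\psi)\) consistent under \(R6\)-closure I must fix a single \(\psi' \in \mathcal{L}_{EL}^G\) and adjoin the entire family \(\{\neg\eta(\psi' \rightarrow \langle \psi' \wedge \chi \rangle \psi) : \chi \in \mathcal{L}_{EL}^{A \setminus G}\}\). My plan is to reduce this to the group case using the validity of \(A11\) (Proposition \ref{prop::axiom}). The contrapositive of \(A11\) is the theorem \([G]\langle A \setminus G\rangle\psi \rightarrow [ \! \langle G \rangle \! ]\psi\), so by Lemma \ref{lemma::2} the implication \(\eta([G]\langle A \setminus G\rangle\psi) \rightarrow \eta([ \! \langle G \rangle \! ]\psi)\) is a theorem, and from \(\neg\eta([ \! \langle G \rangle \! ]\psi) \in x_n\) I obtain \(\neg\eta([G]\langle A \setminus G\rangle\psi) \in x_n\). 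This is a negated group box, so the group witness lemma yields a single \(\psi' \in \mathcal{L}_{EL}^G\) with \(x_n + \neg\eta([\psi']\langle A \setminus G\rangle\psi)\) consistent. It then remains to show that, for every \(\chi \in \mathcal{L}_{EL}^{A\setminus G}\), the implication \((\psi' \rightarrow \langle \psi' \wedge \chi \rangle \psi) \rightarrow [\psi']\langle A \setminus G\rangle\psi\) is a theorem, which follows from Lemma \ref{lemma::1} (to relate \([\psi']\) with \(\langle \psi' \wedge \chi \rangle\)) together with the monotonicity of Lemma \ref{lemma::3} (passing from \(\langle \chi \rangle\) to \(\langle A \setminus G\rangle\)). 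By Lemma \ref{lemma::2} this makes every family member provable from \(\neg\eta([\psi']\langle A \setminus G\rangle\psi)\), so the whole family already lies in the consistent theory \(x_n + \neg\eta([\psi']\langle A \setminus G\rangle\psi)\); hence \(\psi'\) witnesses consistency of the required family. This translation between the single group box and the coalition family is the genuinely new and error-prone step.

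Finally I pass to the limit. By construction \(y\) decides every formula, so it is maximal; each \(x_n\) is consistent, whence \(\bot \notin x_n\) for all \(n\) and \(\bot \notin y\). Closure of \(y\) under \(R0\) is immediate since \(R0\) is finitary. For \(R5\), suppose \(\eta([\chi]\psi) \in y\) for all \(\chi \in \mathcal{L}_{EL}^G\); if \(\eta([G]\psi) \notin y\) then \(\neg\eta([G]\psi) \in y\) by maximality, and since a negated group box can enter \(y\) only as the chosen formula at its own stage (witnesses never have that shape), its witness \(\neg\eta([\chi_0]\psi)\) is in \(y\), contradicting \(\eta([\chi_0]\psi) \in y\) and consistency; hence \(\eta([G]\psi) \in y\). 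The argument for \(R6\) is identical, using the coalition witness family. Thus \(y\) is a maximal consistent theory extending \(x\). I expect the \(A11\)-based coalition witness construction to be the only subtle point; everything else is bookkeeping inherited from the \(\mathbf{APAL}\) completeness proof.
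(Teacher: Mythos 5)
Your proposal is correct and follows essentially the same route as the paper's proof: enumerate the language, adjoin witnesses $\neg\eta([\chi]\psi)$ for negated group boxes via $R5$-closure, and for negated coalition boxes use the contrapositive of $A11$ lifted through the necessity form (Lemma \ref{lemma::2}) to reduce to a single group-style witness $\neg\eta([\psi']\langle A\setminus G\rangle\psi)$, from which the whole family of negated $R6$-premises follows by Lemmas \ref{lemma::2} and \ref{lemma::3}. The only differences are bookkeeping (the paper organises the stages around conclusions of $R5$/$R6$ and their subformulae rather than around the decided formula itself) and that you are slightly more explicit than the paper about the PAL step identifying $[\psi']\langle\chi\rangle\psi$ with the premise shape $\psi'\rightarrow\langle\psi'\wedge\chi\rangle\psi$.
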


\begin{proof}
    Let \(\psi_0, \psi_1, \mathellipsis\) be an enumeration of formulae of the language, and let \(y_0 = x\). 
    Suppose that for some \(n \geq 0\), \(y_n\) is a consistent theory, and \(x \subseteq y_n\). 
    If \(y_n + \psi_n\) is consistent, then \(y_{n+1} = y_n + \psi_n\). 
    Otherwise, if \(\psi_n\) is not a conclusion of either \(R5\) or \(R6\), \(y_{n+1} = y\). 
    If \(\psi_n\) is a conclusion of \(R5\), we enumerate all the subformulae of \(\psi_n\) which contain group announcement modalities \([G]\). 
    Let \(\eta_1 ([G]\varphi_1), \mathellipsis, \eta_k ([G]\varphi_k)\) be all these subformulae. 
    Then \(y_n^0, \mathellipsis, y_n^k\) is a sequence of consistent theories, where \(y_n^0 = y_n\), and for some \(i < k\), \(y_n^i\) is a consistent theory containing \(y_n\) and \(\neg \eta_i ([G]\varphi_i)\). 
    Since \(y_n^i\) is closed under \(R5\), there exists \(\chi \in \mathcal{L}_{EL}^G\) such that \(\eta_i ([\chi]\varphi_i) \not \in y_n^i\). 
    Hence, \(y_n^{i+1} = y_n^i + \neg \eta_i ([\chi]\varphi_i)\), and \(y_{n+1} = y_n^k\).
    
    Now, we consider the case when \(\psi_n\) is a conclusion of \(R6\). We enumerate all the subformulae of \(\psi_n\) which contain coalition announcement modalities \([ \! \langle G \rangle \! ]\). 
    Let \(\eta_1 ([ \! \langle G \rangle \! ]\varphi_1)\), \(\mathellipsis\), \(\eta_k ([ \! \langle G \rangle \! ]\varphi_k)\) be all these subformulae. 
    Then \(y_n^0, \mathellipsis, y_n^k\) is a sequence of consistent theories, where \(y_n^0 = y_n\), and for some \(i < k\), \(y_n^i\) is a consistent theory containing \(y_n\) and \(\neg \eta_i ([ \! \langle G \rangle \! ]\varphi_i)\).
    By \(A11\), this means that \(\neg \eta_i ([G] \langle A \setminus G \rangle \varphi_i) \in y_n^i\).
    Since \(y_n^i\) is closed under \(R5\), there exists \(\chi \in \mathcal{L}_{EL}^G\) such that \(\eta_i ([\chi] \langle A \setminus G \rangle \varphi_i)\) \(\not \in y_n^i\).   
    Hence, \(y_n^{i+1} = y_n^i + \neg \eta_i ([\chi] \langle A \setminus G \rangle \varphi_i)\), and \(y_{n+1} = y_n^k\).
    Note that since for all \(\tau \in \mathcal{L}_{EL}^{A \setminus G}\) \(\eta ([\chi] \langle \tau \rangle \varphi) \rightarrow \eta ([\chi] \langle A \setminus G \rangle \varphi)\) are theorems (by Lemmas \ref{lemma::2} and \ref{lemma::3}), they and their contrapositions are already in \(y_n^i\) (since \(y_n^i\) is a theory). Thus, adding \(\neg \eta_i ([\chi] \langle A \setminus G \rangle \varphi_i)\) to \(y^i_n\) adds all the \(\neg \eta_i ([\chi] \langle \tau \rangle \varphi_i)\) for \(\tau \in \mathcal{L}_{EL}^{A \setminus G}\) as well.
    
    Finally, \(y\) is a maximal consistent theory, and \(x \subseteq y\).
\end{proof} 

The rest of the proof is an expansion of the one from \cite{balbiani15}. It employs induction on complexity of formulae to prove Truth Lemma (Proposition \ref{prop::truth}) and, ultimately, completeness (Proposition \ref{prop::completeness}) of \(\mathbf{CoGAL}\).

\begin{definition}
    The \emph{size} of some formula \(\varphi \in \mathcal{L}_{CoGAL}\) is defined as follows:
    \begin{enumerate}
        \item \(Size (p) = 1\),
        \item \(Size (\neg \varphi)\) \(=\) \(Size (K_a \varphi) = Size ([G] \varphi) =\) \(Size ([ \! \langle G \rangle \! ] \varphi) =\) \(Size (\varphi) + 1\), 
        \item \(Size (\varphi \wedge \psi) = Size (\varphi) + Size (\psi) + 1\),
        \item \(Size ([\psi] \varphi) = Size (\psi) + 3 \cdot Size (\varphi)\).
    \end{enumerate}
    
    The \([]\)-\emph{depth} is defined as follows:
    \begin{enumerate}
        \item \(d_{[]} (p) = 0\),
        \item \(d_{[]} (\neg \varphi) = d_{[]} (K_a \varphi) = d_{[]} ([ \! \langle G \rangle \! ]) = d_{[]} (\varphi)\),
        \item \(d_{[]} (\varphi \wedge \psi) = \mathrm{max} \{d_{[]} (\varphi), d_{[]} (\psi)\}\),
        \item \(d_{[]} ([\psi] \varphi) = d_{[]} (\psi) + d_{[]} (\varphi)\),
        \item \(d_{[]} ([G]\varphi) = d_{[]} (\varphi) + 1\).
    \end{enumerate}
    
    The \([ \! \langle \! \rangle \! ]\)-\emph{depth} is the same as \([]\), with the following exceptions.  
    \begin{enumerate}
    \item \(d_{[ \! \langle \! \rangle \! ]} ([G]\varphi) = d_{[ \! \langle \! \rangle \! ]} (\varphi)\),
    \item \(d_{[ \! \langle \! \rangle \! ]} ([ \! \langle G \rangle \! ]\varphi) = d_{[ \! \langle \! \rangle \! ]} (\varphi) + 1\).
    \end{enumerate}
\end{definition}

\begin{definition}
	The binary relation \(<^{Size}_{[], [ \! \langle \! \rangle \! ]}\) between \(\varphi, \psi \in \mathcal{L}_{CoGAL}\) is defined as follows: \(\varphi <^{Size}_{[], [ \! \langle \! \rangle \! ]} \psi\) iff \(d_{[ \! \langle \! \rangle \! ]} (\varphi) < d_{[ \! \langle \! \rangle \! ]} (\psi)\), or, otherwise,  \(d_{[ \! \langle \! \rangle \! ]} (\varphi) = d_{[ \! \langle \! \rangle \! ]} (\psi)\), and either \(d_{[]} (\varphi) < d_{[]} (\psi)\), or \(d_{[]} (\varphi) = d_{[]} (\psi)\) and \(Size (\varphi) < Size (\psi)\). The relation is a well-founded strict partial order between formulae.
\end{definition}
  
Now, we ensure that the order of complexity is preserved. 
Case \([\bigwedge_{i \in G} K_i \psi_i]\varphi <^{Size}_{[], [ \! \langle \! \rangle \! ]} [G]\varphi\) is obvious, since the public announcement on the left-hand side of the inequality is epistemic, and for any epistemic formula \(\psi\), \(d_{[]}(\psi) = d_{[ \! \langle \! \rangle \! ]} (\psi) = 0\). 
Case \([\chi] [\bigwedge_{i \in G} K_i \psi_i] \varphi <^{Size}_{[], [ \! \langle \! \rangle \! ]} [\chi][G]\varphi\) holds for the same reason. 
The cases for coalitions are identical: \(\bigwedge_{i \in G} K_i \psi_i \rightarrow \langle \bigwedge_{i \in G} K_i \psi_i \wedge \bigwedge_{j \in A \setminus G} K_j \chi_j \rangle \varphi <^{Size}_{[], [ \! \langle \! \rangle \! ]} [ \! \langle G \rangle \! ]\varphi\), and also \([\tau] (\bigwedge_{i \in G} K_i \psi_i \rightarrow \langle \bigwedge_{i \in G} K_i \psi_i\) \(\wedge\) \(\bigwedge_{j \in A \setminus G} K_j \chi_j \rangle \varphi)\) \(<^{Size}_{[], [ \! \langle \! \rangle \! ]}\) \([\tau][ \! \langle G \rangle \! ]\varphi\).

\begin{definition}
    The \emph{canonical model} is the model \(M^C = (W^C, \sim^C, V^C)\), where
    \begin{itemize}
        \item \(W^C\) is the set of all maximal consistent theories,
        \item \(\sim^C\) is defined as \(x \sim^C_a y\) iff \(K_a x \subseteq y\),
        \item \(x \in V^C\) iff \(p \in x\).
    \end{itemize}
    Relation \(\sim^C\) is equivalence due to axioms \(A2\), \(A3\), and \(A4\).
\end{definition}

\begin{definition}
    Let \(\varphi \in \mathcal{L}_{CoGAL}\). 
    Condition \(P(\varphi)\): for all maximal consistent theories \(x\), \(\varphi \in x\) iff \(x \in \llbracket \varphi \rrbracket_{M_C}\). 
    Condition \(H(\varphi)\): for all \(\psi \in\mathcal{L}_{CoGAL}\), if \(\psi <^{Size}_{[], [ \! \langle \! \rangle \! ]} \varphi\), then \(P(\psi)\). 
\end{definition}

\begin{proposition}
\label{prop::ph}
    For all \(\psi \in \mathcal{L}_{CoGAL}\), if \(H(\varphi)\), then \(P(\varphi)\).
\end{proposition}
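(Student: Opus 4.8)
The plan is to derive $P(\varphi)$ from $H(\varphi)$ by a case analysis on the outermost operator of $\varphi$; since $<^{Size}_{[], [ \! \langle \! \rangle \! ]}$ is a well-founded strict partial order, combining this proposition with well-founded induction yields the Truth Lemma. In each case the ingredients I must reason about are strictly smaller in $<^{Size}_{[], [ \! \langle \! \rangle \! ]}$, so that $H(\varphi)$ already supplies $P$ for them — this is exactly what the depth bookkeeping carried out just before the statement guarantees. The propositional atom case is immediate from $V^C$, and the cases $\neg \chi$ and $\chi \wedge \theta$ follow from maximal consistency of $x$ together with $P$ on the (strictly smaller) immediate subformulas. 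For $K_a \chi$, membership-to-truth is read off the definition of $\sim^C_a$, while truth-to-membership is argued contrapositively: if $K_a \chi \notin x$ then $K_a x + \neg \chi$ is consistent, so Proposition \ref{prop::lindenbaum} yields a witness world $y$ with $x \sim^C_a y$ and $\chi \notin y$, whence $P(\chi)$ gives $y \notin \llbracket \chi \rrbracket_{M^C}$. For a public announcement $[\psi]\chi$ I split on the shape of $\chi$: when $\chi$ is atomic, Boolean, epistemic, or an announcement I apply the reduction axioms A5--A9, each of which rewrites $[\psi]\chi$ to a provably equivalent, strictly $<^{Size}_{[], [ \! \langle \! \rangle \! ]}$-smaller formula (the sole purpose of the weights in $Size$), so $P$ transfers along a provable equivalence that maximal consistent theories respect; when $\chi$ is $[G]\chi'$ or $[ \! \langle G \rangle \! ]\chi'$ there is no reduction axiom and I instead invoke R5 or R6 under the necessity form $[\psi]\sharp$, reducing to the announcement-operator treatments below.

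For the group modality $[G]\chi$, membership-to-truth uses A10: from $[G]\chi \in x$ and $[G]\chi \to [\psi]\chi$ I get $[\psi]\chi \in x$ for every $\psi \in \mathcal{L}_{EL}^G$, and $P([\psi]\chi)$ (smaller, since the announcement is epistemic) gives $x \in \llbracket [\psi]\chi \rrbracket_{M^C}$ for all such $\psi$, i.e. $x \in \llbracket [G]\chi \rrbracket_{M^C}$. Truth-to-membership is the contrapositive of closure under R5: if $[G]\chi \notin x$ then some $\psi$ has $[\psi]\chi \notin x$, and $P([\psi]\chi)$ refutes $x \in \llbracket [G]\chi \rrbracket_{M^C}$. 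For the coalition modality $[ \! \langle G \rangle \! ]\theta$ the truth-to-membership direction is handled cleanly by closure under R6 with necessity form $\sharp$: its premise ``for all $\psi$ there is $\chi$ with $(\psi \to \langle \psi \wedge \chi \rangle \theta) \in x$'' is, via $P$ applied to the strictly smaller formulas $\psi \to \langle \psi \wedge \chi \rangle \theta$, exactly the semantic condition $x \in \llbracket [ \! \langle G \rangle \! ]\theta \rrbracket_{M^C}$, and its conclusion is $[ \! \langle G \rangle \! ]\theta \in x$.

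The main obstacle is the remaining direction for the coalition modality, $[ \! \langle G \rangle \! ]\theta \in x \Rightarrow x \in \llbracket [ \! \langle G \rangle \! ]\theta \rrbracket_{M^C}$. Here, in contrast to the group case, there is no axiom unpacking $[ \! \langle G \rangle \! ]\theta$ from the left: A11 yields only $[G]\langle A \setminus G \rangle \theta \to [ \! \langle G \rangle \! ]\theta$ — the coalition operator on the wrong side — and its converse is not even valid, because a simultaneous announcement by $A \setminus G$ can delete states that no announcement made after $\psi$ could. Consequently the existential response $\chi$ that the semantics requires for each $\psi \in \mathcal{L}_{EL}^G$ cannot be produced by any single model-independent theorem, so neither R6 nor the usual R5-style saturation delivers this direction. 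My plan is to combine A11 with Lemma \ref{lemma::1}: the latter lets me replace a formula learned by $A \setminus G$ in the updated model $M^\psi$ by an equivalent formula announced simultaneously with $\psi$ in $M$ — precisely the device behind A11 — thereby reconciling the sequential forcing of the $\langle G \rangle [A \setminus G]$-form (reachable through A11 and R5) with the simultaneous forcing built into $[ \! \langle G \rangle \! ]$. The auxiliary formulas that arise, $[G]\langle A \setminus G \rangle \theta$ and the sequential forms $[\chi]\langle \tau \rangle \theta$, all have strictly smaller $d_{[ \! \langle \! \rangle \! ]}$ than $[ \! \langle G \rangle \! ]\theta$, so $H([ \! \langle G \rangle \! ]\theta)$ already gives $P$ for them; the delicate part I expect to be the heart of the proof is turning the bare membership $[ \! \langle G \rangle \! ]\theta \in x$ into the required simultaneous witnesses, rather than any of the routine reductions.
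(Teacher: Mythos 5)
Your overall architecture coincides with the paper's: induction along the well-founded order \(<^{Size}_{[], [ \! \langle \! \rangle \! ]}\), with the atomic, Boolean, epistemic and reduction-axiom cases deferred to the \(\mathbf{APAL}\) machinery of Balbiani et al., and the quantified modalities (also under a \([\psi]\)-prefix, read as a necessity form) handled through \(A10\) and closure of \(x\) under \(R5\) and \(R6\), the induction hypothesis converting the resulting infinitary syntactic conditions into the semantic clauses. Your treatment of \([G]\chi\) in both directions, and of the truth-to-membership direction for \([ \! \langle G \rangle \! ]\theta\) via \(R6\), is exactly the paper's.

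The gap is that you never prove the direction you yourself single out as the heart of the matter: \([ \! \langle G \rangle \! ]\theta \in x\) implies \(x \in \llbracket [ \! \langle G \rangle \! ]\theta \rrbracket_{M^C}\). What you offer is a plan (``combine \(A11\) with Lemma \ref{lemma::1}''), and as stated the plan stalls. Suppose the required infinitary condition fails, i.e.\ for some \(\chi_0 \in \mathcal{L}_{EL}^G\) and every \(\tau \in \mathcal{L}_{EL}^{A \setminus G}\) we have \(\chi_0 \wedge [\chi_0 \wedge \tau]\neg\theta \in x\). Pushing this through Lemma \ref{lemma::1}, closure under \(R5\) with the necessity form \([\chi_0]\sharp\), and Lemma \ref{lemma::3} yields \(\langle G \rangle [A \setminus G] \neg\theta \in x\); but to contradict \([ \! \langle G \rangle \! ]\theta \in x\), i.e.\ \(\neg \langle \! [ G ] \! \rangle \neg\theta \in x\), from there you need precisely the converse of \(A11\), namely \(\langle G \rangle [A \setminus G] \neg\theta \rightarrow \langle \! [ G ] \! \rangle \neg\theta\) --- whose validity the paper explicitly leaves open in the conclusion (and which you assert, without support, to be invalid). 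The paper itself dispatches this direction in a single line, reading closure under \(R6\) as an \emph{equivalence} between \([ \! \langle G \rangle \! ]\theta \in x\) and the condition \(\forall \chi \, \exists \tau: (\chi \rightarrow \langle \chi \wedge \tau \rangle \theta) \in x\), rather than only the premises-to-conclusion implication that closure literally provides; whether or not one finds that step fully satisfying, your proposal neither reproduces it nor supplies a substitute, so the one case that carries the novelty of the proposition is missing from your argument.
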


\begin{proof}
    Suppose \(H(\varphi)\) holds, and let \(x\) be a maximal consistent theory. The proof is by induction on \(<^{Size}_{[], [ \! \langle \! \rangle \! ]}\)-complexity of formulae. 
    Most of the cases were proved in \cite{balbiani15}.
    We prove here only remaining instances involving group and coalition announcements.
    
    \emph{Case} \(\varphi = [G] \psi\). Suppose that \([G] \psi \in x\). Since \(x\) is closed under \(R5\), this is equivalent to \(\forall \chi \in\mathcal{L}_{EL}^G\): \([\chi] \psi \in x\). 
    By the fact that \([\chi] \psi <^{Size}_{[], [ \! \langle \! \rangle \! ]} [G] \psi\), the latter holds if and only if \(x \in \llbracket [\chi] \psi \rrbracket_{M_C}\) for all \(\chi \in \mathcal{L}_{EL}^G\), which is equivalent to \(x \in \llbracket [G] \psi \rrbracket_{M_C}\) by the semantics.
    
    \emph{Case} \(\varphi = [\chi] [G] \psi\). Suppose that \([\chi] [G] \psi \in x\). Since \([\chi] [G] \psi\) is a necessity form and \(x\) is closed under \(R5\), this is equivalent to \(\forall \tau \in \mathcal{L}_{EL}^G\): \([\chi] [\tau] \psi \in x\). 
    By the fact that \([\chi] [\tau] \psi <^{Size}_{[], [ \! \langle \! \rangle \! ]} [\chi] [G] \psi\), the latter holds if and only if \(x \in \llbracket [\chi] [\tau] \psi \rrbracket_{M_C}\) for all \(\tau \in \mathcal{L}_{EL}^G\), which is equivalent to \(x \in \llbracket [\chi] [G] \psi \rrbracket_{M_C}\) by the semantics. 
    
    \emph{Case} \(\varphi = [ \! \langle G \rangle \! ] \psi\). Suppose that \([ \! \langle G \rangle \! ] \psi \in x\). Since \(x\) is closed under \(R6\), this is equivalent to \(\forall \chi {\in} \mathcal{L}_{EL}^G\) \ \(\exists \tau {\in} \mathcal{L}_{EL}^{A \setminus G}\): \( \chi \rightarrow \langle \chi \wedge \tau \rangle \psi \in x\). 
    By the fact that \(\chi \rightarrow \langle \chi \wedge \tau \rangle \psi <^{Size}_{[], [ \! \langle \! \rangle \! ]} [ \! \langle G \rangle \! ] \psi\), the latter holds if and only if \(\forall \chi {\in} \mathcal{L}_{EL}^G\) \ \(\exists \tau {\in} \mathcal{L}_{EL}^{A \setminus G}\): \(x \in \llbracket \chi \rightarrow \langle \chi \wedge \tau \rangle \psi \rrbracket_{M_C}\)which is equivalent to \(x \in \llbracket [ \! \langle G \rangle \! ] \psi \rrbracket_{M_C}\) by the semantics.
    
    \emph{Case} \(\varphi = [\theta] [ \! \langle G \rangle \! ] \psi\). Suppose that \([\theta] [ \! \langle G \rangle \! ] \psi \in x\). Since \([\theta] [ \! \langle G \rangle \! ] \psi \in x\) is a necessity form and \(x\) is closed under \(R6\), this is equivalent to \(\forall \chi {\in}\mathcal{L}_{EL}^G\) \ \(\exists \tau {\in} \mathcal{L}_{EL}^{A \setminus G}\): \([\theta] (\chi \rightarrow \langle \chi \wedge \tau \rangle \psi) \in x\). 
    By the fact that \([\theta] (\chi \rightarrow \langle \chi \wedge \tau \rangle \psi) <^{Size}_{[], [ \! \langle \! \rangle \! ]} [\theta][ \! \langle G \rangle \! ] \psi\), the latter holds if and only if \(\forall \chi {\in} \mathcal{L}_{EL}^G\) \ \(\exists \tau {\in} \mathcal{L}_{EL}^{A \setminus G}\): \(x \in \llbracket [\theta] (\chi \rightarrow \langle \chi \wedge \tau \rangle \psi) \rrbracket_{M_C}\), which is equivalent to \(x \in \llbracket [\theta] [ \! \langle G \rangle \! ] \psi \rrbracket_{M_C}\) by the semantics.
\end{proof}

Proposition \ref{prop::ph} implies the following fact.

\begin{proposition}
\label{prop::truth}
    Let \(\varphi \in \mathcal{L}_{CoGAL}\), and \(x\) be a maximal consistent theory. Then \(\varphi \in x\) iff \(x \in \llbracket \varphi \rrbracket_{M_C}\). 
\end{proposition}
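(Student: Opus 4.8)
The plan is to observe that the biconditional to be proved, \(\varphi \in x\) iff \(x \in \llbracket \varphi \rrbracket_{M_C}\) for all maximal consistent theories \(x\), is exactly the condition \(P(\varphi)\). Hence the statement asserts that \(P(\varphi)\) holds for every \(\varphi \in \mathcal{L}_{CoGAL}\), and I would derive this directly by well-founded induction on the relation \(<^{Size}_{[], [ \! \langle \! \rangle \! ]}\), using Proposition \ref{prop::ph} as the engine.

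First I would invoke the fact, recorded in the definition of \(<^{Size}_{[], [ \! \langle \! \rangle \! ]}\), that this relation is a well-founded strict partial order on formulae. This licenses the principle of well-founded induction: to establish \(P(\varphi)\) for all \(\varphi\), it suffices to show that \(P(\varphi)\) holds under the assumption that \(P(\psi)\) holds for every \(\psi <^{Size}_{[], [ \! \langle \! \rangle \! ]} \varphi\). Next I would note that this induction hypothesis --- \(P(\psi)\) for all \(\psi <^{Size}_{[], [ \! \langle \! \rangle \! ]} \varphi\) --- is \emph{precisely} the condition \(H(\varphi)\). Proposition \ref{prop::ph} states exactly that \(H(\varphi)\) implies \(P(\varphi)\), so it supplies the inductive step verbatim. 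Combining well-foundedness with Proposition \ref{prop::ph} yields \(P(\varphi)\) for every \(\varphi\), which unfolds to the claimed biconditional.

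The proof therefore carries essentially no independent content: all the genuine work --- the base case for propositional variables and the modal cases for \(K_a\), \([\chi]\), \([G]\), and \([ \! \langle G \rangle \! ]\) --- has already been discharged inside Proposition \ref{prop::ph}, which relies on closure of maximal consistent theories under \(R5\) and \(R6\) together with the complexity comparisons verified just before the canonical model definition. The main point demanding care is purely structural: confirming that the hypothesis of Proposition \ref{prop::ph} coincides with the induction hypothesis of well-founded induction, and that the order governing the outer induction is the same well-founded order \(<^{Size}_{[], [ \! \langle \! \rangle \! ]}\) used in the internal case analysis, so that no circularity is introduced. Once this alignment is granted, the Truth Lemma follows in a single step.
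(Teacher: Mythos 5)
Your proposal is correct and matches the paper's intent exactly: the paper derives Proposition \ref{prop::truth} directly from Proposition \ref{prop::ph} (stating only that the latter ``implies'' the former), and your well-founded induction on \(<^{Size}_{[], [ \! \langle \! \rangle \! ]}\) is precisely the implicit step, since \(H(\varphi)\) is by definition the induction hypothesis at \(\varphi\). Nothing is missing.
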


Finally, we prove the completeness of \(\mathbf{CoGAL}\).

\begin{proposition}
\label{prop::completeness}
    For all \(\varphi \in \mathcal{L}_{CoGAL}\), if \(\varphi\) is valid, then \(\varphi \in \mathbf{CoGAL}\).
\end{proposition}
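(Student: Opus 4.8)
The plan is to prove the contrapositive: assuming \(\varphi \notin \mathbf{CoGAL}\), I will exhibit a pointed model in which \(\varphi\) fails, thereby witnessing that \(\varphi\) is not valid. All the substantial machinery---the Lindenbaum Lemma (Proposition \ref{prop::lindenbaum}), the characterisation of consistency (Proposition \ref{prop::consistency}), and above all the Truth Lemma (Proposition \ref{prop::truth})---is already in place, so this final step amounts to a routine assembly of those results, carried out on the canonical model \(M^C\).

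First I would derive the consistency of the theory \(\mathbf{CoGAL} + \neg\varphi\) from the assumption \(\varphi \notin \mathbf{CoGAL}\). By Proposition \ref{prop::consistency}, instantiated with \(\neg\varphi\) in place of \(\varphi\), the theory \(\mathbf{CoGAL} + \neg\varphi\) is consistent if and only if \(\neg\neg\varphi \notin \mathbf{CoGAL}\); since \(\varphi \leftrightarrow \neg\neg\varphi\) is a propositional theorem and \(\mathbf{CoGAL}\) is closed under \(R0\), this is equivalent to \(\varphi \notin \mathbf{CoGAL}\), which holds by assumption. I would also note that \(\mathbf{CoGAL} + \neg\varphi\) is genuinely a theory by Proposition \ref{prop::mct}, so it is a legitimate consistent theory to feed into the next step.

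Next I would apply the Lindenbaum Lemma (Proposition \ref{prop::lindenbaum}) to extend the consistent theory \(\mathbf{CoGAL} + \neg\varphi\) to a maximal consistent theory \(x\). Since \(\neg\varphi \in \mathbf{CoGAL} + \neg\varphi \subseteq x\) and \(x\) is consistent, we have \(\varphi \notin x\). Finally, invoking the Truth Lemma (Proposition \ref{prop::truth}), from \(\varphi \notin x\) I obtain \(x \notin \llbracket \varphi \rrbracket_{M_C}\), that is, \((M^C, x) \not\models \varphi\). Hence \(\varphi\) fails at the pointed model \((M^C, x)\) and is therefore not valid, which establishes the contrapositive and completes the proof.

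As for the main obstacle, there is none of real depth remaining at this point. The genuine difficulties---extending the \(\mathbf{APAL}\)-style Lindenbaum construction so as to cope with the two interacting quantifiers of the group and coalition operators (exactly where the validity of \(A11\) was required), and pushing the induction on the well-founded order \(<^{Size}_{[], [ \! \langle \! \rangle \! ]}\) through the Truth Lemma---have already been discharged in the preceding propositions. The only points that demand a moment's care here are the bookkeeping with double negation when passing between ``\(\varphi \notin \mathbf{CoGAL}\)'' and ``\(\mathbf{CoGAL} + \neg\varphi\) is consistent,'' and the observation that the sets produced by those earlier propositions really are theories in the technical sense, so that the Lindenbaum Lemma is applicable.
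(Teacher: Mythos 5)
Your proposal is correct and follows essentially the same route as the paper's proof: consistency of \(\mathbf{CoGAL} + \neg\varphi\) via Propositions \ref{prop::mct} and \ref{prop::consistency}, extension to a maximal consistent theory by Proposition \ref{prop::lindenbaum}, and falsification of \(\varphi\) at that point of the canonical model via Proposition \ref{prop::truth}. The only cosmetic difference is that you argue by contraposition where the paper argues by contradiction, and you spell out the double-negation bookkeeping explicitly.
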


\begin{proof}
    Towards a contradiction, suppose that \(\varphi\) is valid and \(\varphi \not \in \mathbf{CoGAL}\). Since \(\mathbf{CoGAL}\) is a consistent theory, and by Propositions \ref{prop::mct} and \ref{prop::consistency}, we have that \(\mathbf{CoGAL} + \neg \varphi\) is a consistent theory. Then, by Proposition \ref{prop::lindenbaum}, there exists a maximal consistent theory \(x \supseteq \mathbf{CoGAL} + \neg \varphi\), such that \(\neg \varphi \in x\). 
    By Proposition \ref{prop::truth}, this means that \(x \not \in \llbracket \varphi \rrbracket_{M_C}\), which contradicts \(\varphi\) being a validity.
\end{proof}

\section{Conclusion} \label{sec:conclusion}

We presented \(\mathbf{CoGAL}\) and provided a complete axiomatisation for it. 
The proof of completeness hinges on the validity of the axiom \(\langle \! [ G ] \! \rangle \varphi \rightarrow \langle G \rangle [ A \setminus G ] \varphi\). Validity of the other direction of the axiom, however, is still an open question. Answering it either way, positively, or negatively, will allow us to understand better mutual expressivity of \(\mathbf{CAL}\) and \(\mathbf{GAL}\).
The axiomatisation of \(\mathbf{CoGAL}\) we presented is infinitary and employs necessity forms. Finding a finitary axiomatisation is yet another open problem. 
An interesting avenue of further research is adding common and distributed knowledge operators to \(\mathbf{CoGAL}\) in the vein of \cite{agotnesA12}. 
Additionally, since it is known that \(\mathbf{GAL}\), \(\mathbf{CAL}\) \cite{agotnes16}, and hence \(\mathbf{CoGAL}\), are undecidable, a search for decidable fragments of these logics is another research question. We would also like to investigate applicability of logics with group and coalition announcements to epistemic planning \cite{bolander11}.
Finally, a complete axiomatisation of \(\mathbf{CAL}\) without group announcement operators has not been provided yet, and it is an intriguing direction of further research. 

\section*{Acknowledgements}

We would like to thank three anonymous reviewers for their insightful suggestions and detailed comments. 

\bibliographystyle{eptcs}
\bibliography{cal}

\begin{thebibliography}{10}
\providecommand{\bibitemdeclare}[2]{}
\providecommand{\surnamestart}{}
\providecommand{\surnameend}{}
\providecommand{\urlprefix}{Available at }
\providecommand{\url}[1]{\texttt{#1}}
\providecommand{\href}[2]{\texttt{#2}}
\providecommand{\urlalt}[2]{\href{#1}{#2}}
\providecommand{\doi}[1]{doi:\urlalt{http://dx.doi.org/#1}{#1}}
\providecommand{\bibinfo}[2]{#2}

\bibitemdeclare{inproceedings}{agotnesA12}
\bibitem{agotnesA12}
\bibinfo{author}{Thomas \surnamestart {\AA}gotnes\surnameend} \&
  \bibinfo{author}{Natasha \surnamestart Alechina\surnameend}
  (\bibinfo{year}{2012}): \emph{\bibinfo{title}{Epistemic coalition logic:
  completeness and complexity}}.
\newblock In: {\sl \bibinfo{booktitle}{International Conference on Autonomous
  Agents and Multiagent Systems, {AAMAS} 2012, Valencia, Spain, June 4-8, 2012
  {(3} Volumes)}}, pp. \bibinfo{pages}{1099--1106}.
\newblock \urlprefix\url{http://dl.acm.org/citation.cfm?id=2343854}.

\bibitemdeclare{article}{agotnes10}
\bibitem{agotnes10}
\bibinfo{author}{Thomas \surnamestart {\AA}gotnes\surnameend},
  \bibinfo{author}{Philippe \surnamestart Balbiani\surnameend},
  \bibinfo{author}{Hans \surnamestart van Ditmarsch\surnameend} \&
  \bibinfo{author}{Pablo \surnamestart Seban\surnameend}
  (\bibinfo{year}{2010}): \emph{\bibinfo{title}{Group announcement logic}}.
\newblock {\sl \bibinfo{journal}{Journal of Applied Logic}}
  \bibinfo{volume}{8}(\bibinfo{number}{1}), pp. \bibinfo{pages}{62--81},
  \doi{10.1016/j.jal.2008.12.002}.

\bibitemdeclare{inproceedings}{agotnes08}
\bibitem{agotnes08}
\bibinfo{author}{Thomas \surnamestart {\AA}gotnes\surnameend} \&
  \bibinfo{author}{Hans \surnamestart van Ditmarsch\surnameend}
  (\bibinfo{year}{2008}): \emph{\bibinfo{title}{Coalitions and announcements}}.
\newblock In: {\sl \bibinfo{booktitle}{7th International Conference on
  Autonomous Agents and Multiagent Systems {(AAMAS} 2008), Estoril, Portugal,
  May 12-16, 2008, Volume 2}}, pp. \bibinfo{pages}{673--680},
  \doi{10.1145/1402298.1402318}.

\bibitemdeclare{incollection}{agotnes14}
\bibitem{agotnes14}
\bibinfo{author}{Thomas \surnamestart {\AA}gotnes\surnameend} \&
  \bibinfo{author}{Hans \surnamestart van Ditmarsch\surnameend}
  (\bibinfo{year}{2014}): \emph{\bibinfo{title}{Knowledge Games and Coalitional
  Abilities}}.
\newblock In: {\sl \bibinfo{booktitle}{Johan van Benthem on Logic and
  Information Dynamics}}, \bibinfo{publisher}{Springer}, pp.
  \bibinfo{pages}{451--485}, \doi{10.1007/978-3-319-06025-5_16}.

\bibitemdeclare{article}{agotnes16}
\bibitem{agotnes16}
\bibinfo{author}{Thomas \surnamestart {\AA}gotnes\surnameend},
  \bibinfo{author}{Hans \surnamestart van Ditmarsch\surnameend} \&
  \bibinfo{author}{Timothy~Stewart \surnamestart French\surnameend}
  (\bibinfo{year}{2016}): \emph{\bibinfo{title}{The Undecidability of
  Quantified Announcements}}.
\newblock {\sl \bibinfo{journal}{Studia Logica}}
  \bibinfo{volume}{104}(\bibinfo{number}{4}), pp. \bibinfo{pages}{597--640},
  \doi{10.1007/s11225-016-9657-0}.

\bibitemdeclare{article}{balbiani15a}
\bibitem{balbiani15a}
\bibinfo{author}{Philippe \surnamestart Balbiani\surnameend}
  (\bibinfo{year}{2015}): \emph{\bibinfo{title}{Putting right the wording and
  the proof of the Truth Lemma for \emph{APAL}}}.
\newblock {\sl \bibinfo{journal}{Journal of Applied Non-Classical Logics}}
  \bibinfo{volume}{25}(\bibinfo{number}{1}), pp. \bibinfo{pages}{2--19},
  \doi{10.1080/11663081.2015.1011489}.

\bibitemdeclare{article}{balbiani08}
\bibitem{balbiani08}
\bibinfo{author}{Philippe \surnamestart Balbiani\surnameend},
  \bibinfo{author}{Alexandru \surnamestart Baltag\surnameend},
  \bibinfo{author}{Hans \surnamestart van Ditmarsch\surnameend},
  \bibinfo{author}{Andreas \surnamestart Herzig\surnameend},
  \bibinfo{author}{Tomohiro \surnamestart Hoshi\surnameend} \&
  \bibinfo{author}{Tiago \surnamestart de~Lima\surnameend}
  (\bibinfo{year}{2008}): \emph{\bibinfo{title}{`{K}nowable' as `known after an
  announcement'}}.
\newblock {\sl \bibinfo{journal}{Review of Symbolic Logic}}
  \bibinfo{volume}{1}(\bibinfo{number}{3}), pp. \bibinfo{pages}{305--334},
  \doi{10.1017/S1755020308080210}.

\bibitemdeclare{inproceedings}{balbiani07}
\bibitem{balbiani07}
\bibinfo{author}{Philippe \surnamestart Balbiani\surnameend},
  \bibinfo{author}{Alexandru \surnamestart Baltag\surnameend},
  \bibinfo{author}{Hans~P. \surnamestart van Ditmarsch\surnameend},
  \bibinfo{author}{Andreas \surnamestart Herzig\surnameend},
  \bibinfo{author}{Tomohiro \surnamestart Hoshi\surnameend} \&
  \bibinfo{author}{Tiago~De \surnamestart Lima\surnameend}
  (\bibinfo{year}{2007}): \emph{\bibinfo{title}{What can we achieve by
  arbitrary announcements?: {A} dynamic take on Fitch's knowability}}.
\newblock In: {\sl \bibinfo{booktitle}{Proceedings of the 11th Conference on
  Theoretical Aspects of Rationality and Knowledge (TARK-2007), Brussels,
  Belgium, June 25-27, 2007}}, pp. \bibinfo{pages}{42--51},
  \doi{10.1145/1324249.1324259}.

\bibitemdeclare{article}{balbiani15}
\bibitem{balbiani15}
\bibinfo{author}{Philippe \surnamestart Balbiani\surnameend} \&
  \bibinfo{author}{Hans \surnamestart van Ditmarsch\surnameend}
  (\bibinfo{year}{2015}): \emph{\bibinfo{title}{A Simple Proof of the
  Completeness of {APAL}}}.
\newblock {\sl \bibinfo{journal}{Studies in Logic}}
  \bibinfo{volume}{8}(\bibinfo{number}{1}), pp. \bibinfo{pages}{65--78}.
\newblock
  \urlprefix\url{http://studiesinlogic.sysu.edu.cn:8080/ljxyj/EN/abstract/abstract211.shtml}.

\bibitemdeclare{article}{bolander11}
\bibitem{bolander11}
\bibinfo{author}{Thomas \surnamestart Bolander\surnameend} \&
  \bibinfo{author}{Mikkel~Birkegaard \surnamestart Andersen\surnameend}
  (\bibinfo{year}{2011}): \emph{\bibinfo{title}{Epistemic planning for single
  and multi-agent systems}}.
\newblock {\sl \bibinfo{journal}{Journal of Applied Non-Classical Logics}}
  \bibinfo{volume}{21}(\bibinfo{number}{1}), pp. \bibinfo{pages}{9--34},
  \doi{10.3166/jancl.21.9-34}.

\bibitemdeclare{inproceedings}{vanditmarsch12}
\bibitem{vanditmarsch12}
\bibinfo{author}{Hans \surnamestart van Ditmarsch\surnameend}
  (\bibinfo{year}{2012}): \emph{\bibinfo{title}{Quantifying Notes}}.
\newblock In: {\sl \bibinfo{booktitle}{Logic, Language, Information and
  Computation - 19th International Workshop, WoLLIC 2012, Buenos Aires,
  Argentina, September 3-6, 2012.Proceedings}}, pp. \bibinfo{pages}{89--109},
  \doi{10.1007/978-3-642-32621-9_8}.

\bibitemdeclare{book}{del}
\bibitem{del}
\bibinfo{author}{Hans \surnamestart van Ditmarsch\surnameend},
  \bibinfo{author}{Wiebe \surnamestart van~der Hoek\surnameend} \&
  \bibinfo{author}{Barteld \surnamestart Kooi\surnameend}
  (\bibinfo{year}{2008}): \emph{\bibinfo{title}{Dynamic Epistemic Logic}}.
\newblock {\sl \bibinfo{series}{Synthese Library}} \bibinfo{volume}{337},
  \bibinfo{publisher}{Springer}, \doi{10.1007/978-1-4020-5839-4}.

\bibitemdeclare{book}{goldblatt}
\bibitem{goldblatt}
\bibinfo{author}{Robert \surnamestart Goldblatt\surnameend}
  (\bibinfo{year}{1982}): \emph{\bibinfo{title}{Axiomatising the Logic of
  Computer Programming}}.
\newblock {\sl \bibinfo{series}{Lecture Notes in Computer Science}}
  \bibinfo{volume}{130}, \bibinfo{publisher}{Springer},
  \doi{10.1007/BFb0022481}.

\bibitemdeclare{article}{pauly02}
\bibitem{pauly02}
\bibinfo{author}{Marc \surnamestart Pauly\surnameend} (\bibinfo{year}{2002}):
  \emph{\bibinfo{title}{A Modal Logic for Coalitional Power in Games}}.
\newblock {\sl \bibinfo{journal}{Journal of Logic and Computation}}
  \bibinfo{volume}{12}(\bibinfo{number}{1}), pp. \bibinfo{pages}{149--166},
  \doi{10.1093/logcom/12.1.149}.

\bibitemdeclare{article}{plaza07}
\bibitem{plaza07}
\bibinfo{author}{Jan \surnamestart Plaza\surnameend} (\bibinfo{year}{2007}):
  \emph{\bibinfo{title}{Logics of public communications (reprint of 1989's
  paper)}}.
\newblock {\sl \bibinfo{journal}{Synthese}}
  \bibinfo{volume}{158}(\bibinfo{number}{2}), pp. \bibinfo{pages}{165--179},
  \doi{10.1007/s11229-007-9168-7}.

\bibitemdeclare{inproceedings}{renne09}
\bibitem{renne09}
\bibinfo{author}{Bryan \surnamestart Renne\surnameend}, \bibinfo{author}{Joshua
  \surnamestart Sack\surnameend} \& \bibinfo{author}{Audrey \surnamestart
  Yap\surnameend} (\bibinfo{year}{2009}): \emph{\bibinfo{title}{Dynamic
  Epistemic Temporal Logic}}.
\newblock In: {\sl \bibinfo{booktitle}{Logic, Rationality, and Interaction,
  Second International Workshop, {LORI} 2009, Chongqing, China, October 8-11,
  2009. Proceedings}}, pp. \bibinfo{pages}{263--277},
  \doi{10.1007/978-3-642-04893-7_21}.

\end{thebibliography}

\appendix

\section{Coalition and Group Annoucement Logic Subsumes Coalition Logic}

\begin{definition}

Axiomatisation of \(\mathbf{CL}\) is as follows:

\(
\begin{array}[h]{ll}
(C0) &\textrm{all instantiation of propositional tautologies},\\

(C1) &\neg \langle \! [ G ] \! \rangle \bot, \\

(C2) &\langle \! [ G ] \! \rangle \top, \\

(C3) &\neg \langle \! [ \emptyset ] \! \rangle \neg \varphi \rightarrow \langle \! [ A ] \! \rangle \varphi,

\end{array}\)
\(
\begin{array}[h]{ll}

(C4) &\langle \! [ G ] \! \rangle (\varphi \wedge \psi) \rightarrow \langle \! [ G ] \! \rangle \varphi, \\      

(C5) &\langle \! [ G ] \! \rangle \varphi \wedge\langle \! [ H ] \! \rangle \psi \rightarrow \langle \! [ G \cup H ] \! \rangle (\varphi \wedge \psi), \\
	&\qquad \qquad \qquad \quad \textrm{ if } G \cap H = \emptyset, \\

(R0) &\vdash \varphi, \varphi \rightarrow \psi \Rightarrow \, \vdash \psi,\\

(R1) & \vdash \varphi \leftrightarrow \psi \Rightarrow \, \vdash \langle \! [ G ] \! \rangle \varphi \leftrightarrow \langle \! [ G ] \! \rangle \psi.

\end{array}\)

\end{definition}

\begin{proposition}

\(\mathbf{CoGAL}\) contains \(\mathbf{CL}\).

\end{proposition}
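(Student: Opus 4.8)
The plan is to exploit soundness and completeness of $\mathbf{CoGAL}$ (established in the main text): it suffices to show that every $\mathbf{CL}$-theorem is \emph{valid} in the $\mathbf{CoGAL}$ semantics, since then completeness (Proposition \ref{prop::completeness}) immediately places it in $\mathbf{CoGAL}$. I would therefore argue by induction on the length of a $\mathbf{CL}$-derivation, verifying that each axiom $C0$--$C5$ is valid and that the rules $R0$, $R1$ preserve validity. Throughout, the main device is to unfold the coalition-operator semantics, i.e. $(M,w) \models \langle \! [ G ] \! \rangle \varphi$ iff $\exists \psi {\in} \mathcal{L}_{EL}^G \ \forall \chi {\in} \mathcal{L}_{EL}^{A\setminus G}: (M,w) \models \psi \wedge [\psi \wedge \chi]\varphi$, recalling that the empty conjunction is $\top$ so that $\mathcal{L}_{EL}^\emptyset = \{\top\}$, and that $\bigwedge_{j} K_j \top$ is a valid ``trivial'' announcement available to any group.

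The easy cases go as follows. $C0$ is just $A0$. For $C1$, any witness $\psi$ for the coalition is true at $w$, and against the trivial opponent announcement $\bigwedge_{j \in A\setminus G} K_j \top$ the formula $\psi \wedge \chi$ stays true, so $[\psi \wedge \chi]\bot$ fails; hence $\langle \! [ G ] \! \rangle \bot$ can never hold. For $C2$, the witness $\psi = \bigwedge_{i\in G} K_i \top$ is true and $\top$ survives every update, so $\langle \! [ G ] \! \rangle \top$ holds. For $C4$, a single witness suffices: $A7$ gives $[\psi \wedge \chi](\varphi \wedge \psi') \leftrightarrow [\psi\wedge\chi]\varphi \wedge [\psi\wedge\chi]\psi'$, so the same coalition announcement that forces $\varphi \wedge \psi'$ forces $\varphi$. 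For $C3$, I would unfold both sides using $\mathcal{L}_{EL}^\emptyset = \{\top\}$: the antecedent $\neg \langle \! [ \emptyset ] \! \rangle \neg \varphi$ reduces to $\exists \chi {\in} \mathcal{L}_{EL}^A: (M,w) \models \langle \chi \rangle \varphi$, and the consequent $\langle \! [ A ] \! \rangle \varphi$ reduces to exactly the same condition, so the implication is valid (indeed a biconditional).

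The main work is $C5$ (superadditivity), and it is where disjointness $G \cap H = \emptyset$ is essential. Given a witness $\alpha_G \in \mathcal{L}_{EL}^G$ for $\langle \! [ G ] \! \rangle \varphi$ and $\alpha_H \in \mathcal{L}_{EL}^H$ for $\langle \! [ H ] \! \rangle \psi$, I would take the combined witness $\delta = \alpha_G \wedge \alpha_H$, which lies in $\mathcal{L}_{EL}^{G \cup H}$ and is true at $w$. For an arbitrary opponent announcement $\epsilon \in \mathcal{L}_{EL}^{A \setminus (G\cup H)}$, disjointness yields the partitions $A \setminus G = H \cup (A \setminus (G\cup H))$ and $A \setminus H = G \cup (A \setminus (G\cup H))$, so that $\alpha_H \wedge \epsilon \in \mathcal{L}_{EL}^{A\setminus G}$ and $\alpha_G \wedge \epsilon \in \mathcal{L}_{EL}^{A\setminus H}$. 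Instantiating the two given universal conditions at these announcements gives $[\delta \wedge \epsilon]\varphi$ and $[\delta \wedge \epsilon]\psi$, and $A7$ combines them into $[\delta \wedge \epsilon](\varphi \wedge \psi)$; since $\delta$ is true, $\langle \! [ G \cup H ] \! \rangle (\varphi \wedge \psi)$ holds. The step I expect to be most delicate is precisely checking that the recombined announcements land in the correct language fragments $\mathcal{L}_{EL}^{A\setminus G}$ and $\mathcal{L}_{EL}^{A\setminus H}$ --- this fails without $G \cap H = \emptyset$, which matches the side condition of $C5$.

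Finally, the rules. $R0$ is literally a rule of $\mathbf{CoGAL}$, so it preserves validity. For $R1$, if $\varphi \leftrightarrow \psi$ is valid then $\llbracket \varphi \rrbracket_M = \llbracket \psi \rrbracket_M$ in every model, so any announcement produces the same updated submodel and $[\theta]\varphi \leftrightarrow [\theta]\psi$ holds everywhere; the existential/universal quantifier pattern in the coalition semantics then transfers verbatim, yielding validity of $\langle \! [ G ] \! \rangle \varphi \leftrightarrow \langle \! [ G ] \! \rangle \psi$. Combining the base and inductive steps, every $\mathbf{CL}$-theorem is valid, hence by completeness a theorem of $\mathbf{CoGAL}$.
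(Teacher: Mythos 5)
Your proposal is correct and, for $C0$--$C4$, $R0$ and $R1$, essentially coincides with the paper's Appendix~A argument: the paper likewise unfolds the semantics of $\langle \! [ G ] \! \rangle$, uses the trivial announcement of known tautologies for $C2$, the collapse of $\mathcal{L}_{EL}^{\emptyset}$ to the trivial announcement for $C3$, and $A7$ for $C4$ (the only cosmetic difference is $C1$, which the paper derives syntactically from $R4$ and duality rather than semantically). Where you genuinely diverge is $C5$, and your route is arguably cleaner. The paper argues in two stages: it first shows, by contradiction and a quantifier-regrouping over $A \setminus G = H \cup (A \setminus (G \cup H))$, that the witness for $\langle \! [ G ] \! \rangle \varphi_1$ can be promoted to a witness for $G \cup H$ forcing $\varphi_1$, then says ``similarly'' for $\varphi_2$, and finally conjoins via distributivity of $[\,]$ over $\wedge$ --- a step that tacitly requires the two promoted witnesses to be the \emph{same} announcement, which the existential conclusions of the two contradiction arguments do not by themselves guarantee. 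Your construction supplies exactly the missing ingredient: you take the single explicit combined witness $\alpha_G \wedge \alpha_H$ (well-formed in $\mathcal{L}_{EL}^{G \cup H}$ precisely because $G \cap H = \emptyset$), instantiate each original universal quantifier at $\alpha_H \wedge \epsilon$ and $\alpha_G \wedge \epsilon$ respectively, and only then apply $A7$. This yields $[\delta \wedge \epsilon](\varphi \wedge \psi)$ for a common $\delta$ directly, avoids the detour through contradiction, and isolates exactly where disjointness is used. Your outer framing via soundness and completeness (validity of each $\mathbf{CL}$-theorem, then Proposition~\ref{prop::completeness}) is also consistent with the paper, which states the claim in the main text as ``all the axioms of the latter are validities of $\mathbf{CoGAL}$ and a rule of inference preserves validity.''
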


\begin{proof}

\(C0\) and \(R0\) are already in \(\mathbf{CoGAL}\).

\(R1\): \(\vdash \varphi \leftrightarrow \psi \Rightarrow \, \vdash \langle \! [ G ] \! \rangle \varphi \leftrightarrow \langle \! [ G ] \! \rangle \psi\). Assume that\(\models \varphi \leftrightarrow \psi\). This means that for any pointed model \((M,w)\) the following holds: \((M,w) \models \varphi\) iff \((M,w) \models \psi\) (1). Now, suppose that for some pointed model \((M,v)\) it holds that \((M,v) \models \langle \! [ G ] \! \rangle \varphi\). By the semantics, \(\exists \chi {\in} \mathcal{L}_{EL}^{G}\) \ \(\forall \tau {\in} \mathcal{L}_{EL}^{A \setminus G}\): \((M,v) \models \chi \wedge [\chi \wedge \tau] \varphi\), which is equivalent to the following: \((M,v) \models \chi\) and (\((M,v) \models \chi \wedge \tau\) implies \((M^{\chi \wedge \tau},v) \models \varphi\)). By (1), we have that \(\exists \chi {\in} \mathcal{L}_{EL}^{G}\) \ \(\forall \tau {\in} \mathcal{L}_{EL}^{A \setminus G}\): \((M,v) \models \chi\) and (\((M,v) \models \chi \wedge \tau\) implies \((M^{\chi \wedge \tau},v) \models \psi\)), which is \((M,v) \models \langle \! [ G ] \! \rangle \psi\) by the semantics. The same argumet holds in the other direction.

\(C1\): \(\neg \langle \! [ G  ] \! \rangle \bot\). From \(\top\) we derive \([ \! \langle G \rangle \! ] \top\) by \(R4\). Using the dual of box, we have \(\neg \langle \! [ G ] \! \rangle \neg \top\), or \(\neg \langle \! [ G ] \! \rangle \bot\).

\(C2\): \(\langle \! [ G ] \! \rangle \top\). In any state, there exists a true
announcement by $G$ (each agent in $G$ announces their knowledge of a tautology) and after any joint announcement, $\top$ is true, hence, the axiom is valid. 

\(C3\): \(\neg \langle \! [ \emptyset ] \! \rangle \neg \varphi \rightarrow \langle \! [ A ] \! \rangle \varphi\). By the semantics, \(\neg \langle \! [ \emptyset ] \! \rangle \neg \varphi\), which is \([ \! \langle \emptyset \rangle \! ] \varphi\), means that there exists some \(\psi \in \mathcal{L}_{EL}^{A}\), such that \(\langle \psi \rangle \varphi\). This is precisely the meaning of \(\langle \! [ A ] \! \rangle \varphi\).

\(C4\): \(\langle \! [ G ] \! \rangle (\varphi \wedge \psi) \rightarrow \langle \! [ G ] \! \rangle \varphi\).  Suppose that  \(\langle \! [ G ] \! \rangle (\varphi \wedge \psi) \) holds. By the semantics, \(\exists \chi {\in} \mathcal{L}_{EL}^{G}\) \ \(\forall \tau {\in} \mathcal{L}_{EL}^{A \setminus G}\): \(\chi \wedge [\chi \wedge \tau] (\varphi \wedge \psi)\). Then, by \(A7\), we have \(\exists \chi {\in} \mathcal{L}_{EL}^{G}\) \ \(\forall \tau {\in} \mathcal{L}_{EL}^{A \setminus G}\): \(\chi \wedge [\chi \wedge \tau] \varphi \wedge  [\chi \wedge \tau] \psi\). The latter implies  \(\exists \chi {\in} \mathcal{L}_{EL}^{G}\) \ \(\forall \tau {\in} \mathcal{L}_{EL}^{A \setminus G}\): \(\chi \wedge [\chi \wedge \tau] \varphi\), which is \(\langle \! [ G ] \! \rangle \varphi\) by the semantics.

\(C5\): \(\langle \! [ G ] \! \rangle \varphi_1 \wedge\langle \! [ H ] \! \rangle \varphi_2 \rightarrow \langle \! [ G \cup H ] \! \rangle (\varphi_1 \wedge \varphi_2),	\textrm{ if } G \cap H = \emptyset\).  Assume that \(\langle \! [ G ] \! \rangle \varphi_1 \wedge\langle \! [ H ] \! \rangle \varphi_2\) holds. Let us consider the first conjunct. By the semantics, we have \(\exists \psi_G {\in} \mathcal{L}_{EL}^{G}\) \ \(\forall \chi_{A \setminus G} {\in} \mathcal{L}_{EL}^{A \setminus G}\): \(\psi_G \wedge [\psi_G \wedge \chi_{A \setminus G}]\varphi_1\) (1).  The latter implies \(\exists \psi_{G \cup H} {\in} \mathcal{L}_{EL}^{G \cup H}\) \ \(\forall \chi_{A \setminus {G \cup H}} {\in} \mathcal{L}_{EL}^{A \setminus {G \cup H}}\): \(\psi_{G \cup H} \wedge [\psi_{G \cup H} \wedge \chi_{A \setminus G \cup H}]\varphi_1\). In order to show this let us assume to the contrary that  \(\forall \psi_{G \cup H} {\in} \mathcal{L}_{EL}^{G \cup H}\) \ \(\exists \chi_{A \setminus {G \cup H}} {\in} \mathcal{L}_{EL}^{A \setminus {G \cup H}}\): \(\psi_{G} \rightarrow \langle \psi_{G} \wedge \psi_H \wedge \chi_{A \setminus G \cup H} \rangle \neg \varphi_1\). Next, by (1), we fix some \(\psi_G\). Then, we have \(\forall \psi_{H} {\in} \mathcal{L}_{EL}^{H}\) \ \(\exists \chi_{A \setminus {G \cup H}} {\in} \mathcal{L}_{EL}^{A \setminus {G \cup H}}\): \(\psi_{G} \rightarrow \langle \psi_{G} \wedge \psi_H \wedge \chi_{A \setminus G \cup H} \rangle \neg \varphi_1\), which means that there is a combination of \(\psi_H\) and \(\chi_{A \setminus G \cup H}\) that makes \(\varphi_1\) false. Since sets \(H\) and \(A \setminus G \cup H\) comprise \(A \setminus G\) (on condition that \(G \cap H = \emptyset\)), this means that there is some \(\chi_{A \setminus G}\) which enforces \(\neg \varphi_1\). Hence, the contradiction with (1).
Similarly, we can prove that \(\exists \psi_{G \cup H} {\in} \mathcal{L}_{EL}^{G \cup H}\) \ \(\forall \chi_{A \setminus {G \cup H}} {\in} \mathcal{L}_{EL}^{A \setminus {G \cup H}}\): \(\psi_{G \cup H} \wedge [\psi_{G \cup H} \wedge \chi_{A \setminus G \cup H}]\varphi_2\). 
By distributivity of $[\ ]$ over $\wedge$,
we get \(\exists \psi_{G \cup H} {\in} \mathcal{L}_{EL}^{G \cup H}\) \ \(\forall \chi_{A \setminus {G \cup H}} {\in} \mathcal{L}_{EL}^{A \setminus {G \cup H}}\): \(\psi_{G \cup H} \wedge [\psi_{G \cup H} \wedge \chi_{A \setminus G \cup H}]\ (\varphi_1 \wedge \varphi_2)\). 
Hence $\langle \! [ G \cup H ] \! \rangle (\varphi_1 \wedge \varphi_2)$.
\end{proof}

\section{Proofs of Propositions}

\begin{proposition}
\label{prop::app1}
    \(R4\) and \(R6\) are sound, that is, they preserve validity.
\end{proposition}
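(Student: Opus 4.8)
The plan is to handle the two rules separately, since $R4$ is a plain necessitation rule while $R6$ carries the real difficulty. Throughout I work semantically, using the validity formulation of soundness: premises valid implies conclusion valid.

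For $R4$ I would argue directly. Assume $\varphi$ is valid; I want $[ \! \langle G \rangle \! ]\varphi$ valid, i.e. at every $(M,w)$ and for every $\psi \in \mathcal{L}_{EL}^G$ there is some $\chi \in \mathcal{L}_{EL}^{A \setminus G}$ with $(M,w) \models \psi \rightarrow \langle \psi \wedge \chi \rangle \varphi$. I would simply take $\chi = \bigwedge_{j \in A \setminus G} K_j \top$, which is itself valid. If $(M,w) \not\models \psi$ the implication is vacuous; if $(M,w) \models \psi$ then $(M,w) \models \psi \wedge \chi$, so $w$ survives the update and $(M^{\psi \wedge \chi}, w) \models \varphi$ because $\varphi$ is valid, giving $(M,w) \models \langle \psi \wedge \chi \rangle \varphi$. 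Hence $[ \! \langle G \rangle \! ]\varphi$ holds everywhere.

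For $R6$ I would argue by contraposition, isolating the work in a structural lemma about necessity forms: for any $\eta$ and any $(M,w)$ with $(M,w) \not\models \eta(\alpha)$, there is a pointed model $(M',w')$, obtained by descending through $\eta$, such that $(M',w') \not\models \alpha$ and, crucially, for every formula $\beta$, $(M',w') \not\models \beta$ implies $(M,w) \not\models \eta(\beta)$. This follows by induction on $\eta$: the $\sharp$ case is immediate; for $\varphi \rightarrow \eta'$ one records that $(M,w) \models \varphi$ and descends into $\eta'$; for $K_a \eta'$ one descends to a $\sim_a$-successor falsifying $\eta'(\alpha)$; for $[\varphi]\eta'$ one passes to $(M^\varphi, w)$. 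In each step the transfer property for arbitrary $\beta$ is preserved. Now suppose $\eta([ \! \langle G \rangle \! ]\varphi)$ is not valid, so $(M,w) \not\models \eta([ \! \langle G \rangle \! ]\varphi)$ for some $(M,w)$. Applying the lemma with $\alpha = [ \! \langle G \rangle \! ]\varphi$ yields $(M',w')$ with $(M',w') \not\models [ \! \langle G \rangle \! ]\varphi$; negating the semantics gives a \emph{single} $\psi^* \in \mathcal{L}_{EL}^G$ such that $(M',w') \not\models \psi^* \rightarrow \langle \psi^* \wedge \chi \rangle \varphi$ for \emph{every} $\chi \in \mathcal{L}_{EL}^{A \setminus G}$. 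Feeding each $\beta = \psi^* \rightarrow \langle \psi^* \wedge \chi \rangle \varphi$ into the transfer property gives $(M,w) \not\models \eta(\psi^* \rightarrow \langle \psi^* \wedge \chi \rangle \varphi)$ for all $\chi$, so for this $\psi^*$ no $\chi$ makes $\eta(\psi^* \rightarrow \langle \psi^* \wedge \chi \rangle \varphi)$ valid. That is exactly the negation of the premise of $R6$, completing the contrapositive.

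I expect the main obstacle to be the quantifier alternation in $[ \! \langle G \rangle \! ]$. One is tempted to mimic the $R5$ soundness argument by proving a biconditional commuting the quantifiers through $\eta$, but this breaks at the $K_a$ clause: the witness $\chi$ may depend on the chosen $\sim_a$-successor, so the existential over $\chi$ cannot be pulled outside the universal over successors. The contrapositive route sidesteps this precisely because negation collapses the outer $\forall \psi$ into a single witness $\psi^*$ fixed once and for all, after which only the universal over $\chi$ remains, and that universal passes through $\eta$ uniformly via the transfer property. Keeping $\psi^*$ fixed while ranging over all $\chi$ is the delicate point that makes the argument go through.
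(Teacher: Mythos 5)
Your proposal is correct, but for $R6$ it takes a genuinely different route from the paper's. The paper argues directly: it inducts on the structure of $\eta$ at the level of \emph{validity}, using in each step that validity of $\tau\rightarrow\eta'(\cdot)$, $K_a\eta'(\cdot)$ or $[\tau]\eta'(\cdot)$ for the whole premise family yields validity of $\eta'(\cdot)$ for that family (the quantifier over models and states is innermost, so the witness $\chi$ for each $\psi$ is already uniform across all models), applies the induction hypothesis to obtain validity of $\eta'([ \! \langle G \rangle \! ]\varphi)$, and then restores the outer connective by necessitation. For this reason the quantifier-alternation obstacle you describe does not actually arise in the direct argument; it would only bite if one tried to prove a pointwise biconditional commuting the quantifiers through $\eta$ at a fixed $(M,w)$. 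Your contrapositive argument via the transfer lemma (descend through $\eta$ to a falsifying point, and falsity of any $\beta$ there lifts back to falsity of $\eta(\beta)$) is a clean and correct alternative: it isolates the structural content of necessity forms in a reusable lemma, makes the single witness $\psi^*$ explicit, and is arguably tighter than the paper's induction, whose write-up blurs pointwise truth with validity in the $K_a$ and $[\tau]$ cases. Your $R4$ argument matches the paper's in substance and is more careful on one point: you note explicitly that the witness $\chi=\bigwedge_{j\in A\setminus G}K_j\top$ is true at $w$, so that the diamond $\langle\psi\wedge\chi\rangle\varphi$ (and not merely the box $[\psi\wedge\chi]\varphi$) is satisfied, a step the paper's appeal to $R2$ leaves implicit.
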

\begin{proof}
  (\(R4\)) Assume that \(\models \varphi\). By \(R2\), for an arbitrary \(\psi\), \(\models [\psi] \varphi\). Since \(\psi\) is arbitrary, \(\models [ \! \langle G \rangle \! ] \varphi\); in other words, whatever agents announce, they cannot make a valid formula false.
    
    (\(R6\)) Let \((M,w)\) be an arbitrary pointed model. We proceed by induction on \(\eta\). 

    \emph{Base case}. \(\forall \psi {\in}  \mathcal{L}_{EL}^G\) \(\exists \chi {\in}  \mathcal{L}_{EL}^{A \setminus G}\): \(\psi \rightarrow \langle \psi \wedge \chi \rangle \varphi\) is valid. Therefore, by the semantics, we infer validity of \([ \! \langle G \rangle \! ] \varphi\). 

    \emph{Induction hypothesis}. 
Assume the rule preserves validity for $n(\eta(\psi \rightarrow \langle \psi \wedge \chi \rangle \varphi) = k$. 
We show that it holds for $k+1$. 

    \emph{Case} \(\forall \psi {\in}  \mathcal{L}_{EL}^G\) \(\exists \chi {\in}  \mathcal{L}_{EL}^{A \setminus G}\): \(\tau \rightarrow \eta (\psi \rightarrow \langle \psi \wedge \chi \rangle \varphi)\) is valid. This means that \((M,w) \models \tau \rightarrow \eta (\psi \rightarrow \langle \psi \wedge \chi \rangle \varphi)\) iff \((M,w) \models \neg \tau\) or \((M,w) \models \eta (\psi \rightarrow \langle \psi \wedge \chi \rangle \varphi)\), which is \((M,w) \models \neg \tau\) or \((M,w) \models [ \! \langle G \rangle \! ] \varphi\) by Induction hypothesis. Hence, \((M,w) \models \tau \rightarrow [ \! \langle G \rangle \! ] \varphi\). 

    \emph{Case} \(\forall \psi{\in}  \mathcal{L}_{EL}^G\) \(\exists \chi {\in}  \mathcal{L}_{EL}^{A \setminus G}\): \(K_a \eta (\psi \rightarrow \langle \psi \wedge \chi \rangle \varphi)\) is valid. This means that \((M,w) \models K_a \eta (\psi \rightarrow \langle \psi \wedge \chi \rangle \varphi)\). By the semantics, for every \(v \in W : (w,v) \in \sim_a\) implies \((M,v) \models \eta (\psi \rightarrow \langle \psi \wedge \chi \rangle \varphi)\). By Induction hypothesis, for every \(v \in W : (w,v) \in \sim_a \) implies \((M,v) \models \eta ([ \! \langle G \rangle \! ] \varphi)\). And, by the semantics, \((M,w) \models K_a \eta ([ \! \langle G \rangle \! ] \varphi)\).

    \emph{Case} \(\forall \psi {\in}  \mathcal{L}_{EL}^G\) \(\exists \chi {\in}  \mathcal{L}_{EL}^{A \setminus G}\): \([\tau]\eta (\psi \rightarrow \langle \psi \wedge \chi \rangle \varphi)\) is valid. This means that \((M,w) \models [\tau]\eta (\psi \rightarrow \langle \psi \wedge \chi \rangle \varphi)\). By the semantics, \((M,w) \models \tau\) implies \((M^\tau, w) \models \eta (\psi \rightarrow \langle \psi \wedge \chi \rangle \varphi)\). By Induction hypothesis, \((M,w) \models \tau\) implies \((M^\tau, w) \models [ \! \langle G \rangle \! ] \varphi\). Finally, by the semantics, \((M,w) \models [\tau ][ \! \langle G \rangle \! ] \varphi\).
\end{proof}

\begin{proposition}
\label{prop::app2}
    \(\langle \! [ G ] \! \rangle \langle \! [ H ] \! \rangle \varphi  \rightarrow \langle \! [ G \cup H ] \! \rangle \varphi\) is valid.
\end{proposition}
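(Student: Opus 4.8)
The plan is to argue semantically, in the same spirit as the proof of Proposition~\ref{prop::axiom} (validity of \(A11\)) and the appendix proof of \(C5\). Unfolding the dual coalition semantics, from \((M,w) \models \langle \! [ G ] \! \rangle \langle \! [ H ] \! \rangle \varphi\) I would first fix a witnessing announcement \(\psi_G \in \mathcal{L}_{EL}^G\), true at \(w\), such that for every \(\chi \in \mathcal{L}_{EL}^{A \setminus G}\) with \(\psi_G \wedge \chi\) true at \(w\) we have \((M^{\psi_G \wedge \chi}, w) \models \langle \! [ H ] \! \rangle \varphi\). The goal is then to exhibit a single announcement \(\Psi \in \mathcal{L}_{EL}^{G \cup H}\), true at \(w\), that defeats every response \(\tau \in \mathcal{L}_{EL}^{A \setminus (G \cup H)}\); that is, with \((M^{\Psi \wedge \tau}, w) \models \varphi\) whenever \(\Psi \wedge \tau\) is true at \(w\).

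The natural candidate is \(\Psi = \psi_G \wedge \hat\psi_H\), where \(\hat\psi_H \in \mathcal{L}_{EL}^{H}\) is the pullback, via Lemma~\ref{lemma::1} and the translation function \(t\), of a winning inner \(H\)-response in the updated model. The role of Lemma~\ref{lemma::1} is exactly as in Proposition~\ref{prop::axiom}: announcing \(\psi_G \wedge \hat\psi_H\) \emph{simultaneously} in \(M\) reproduces the effect of announcing \(\psi_G\) first and the inner \(H\)-formula afterwards, since by that lemma \([\psi_G \wedge \hat\psi_H]\theta \leftrightarrow [\psi_G][\psi_H]\theta\) for the corresponding \(\psi_H\). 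To handle a given adversarial response \(\tau \in \mathcal{L}_{EL}^{A \setminus (G \cup H)}\), I would regroup the joint announcement \(\psi_G \wedge \hat\psi_H \wedge \tau\): taking \(G\) and \(H\) disjoint (the general case reduces to this, as \(G \cup H = G \cup (H \setminus G)\)), the set \(A \setminus G\) splits as \(H \sqcup (A \setminus (G \cup H))\), so \(\hat\psi_H \wedge \tau\) can be read as a first-round response \(\chi \in \mathcal{L}_{EL}^{A \setminus G}\) by the outer adversary. Invoking the outer guarantee for this \(\chi\), and rewriting the nested updates with \(A7\), \(A8\) and \(A9\), lets me relate \((M^{\Psi \wedge \tau}, w)\) to a position in the sequential game.

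The main obstacle is that this regrouping shows \((M^{\Psi \wedge \tau}, w) \models \langle \! [ H ] \! \rangle \varphi\) rather than \(\varphi\) outright, and, more seriously, that \(H\)'s optimal inner response generally \emph{depends} on the first-round move, whereas \(\Psi\) must be committed before \(\tau\) is seen; the coalition and adversary roles of \(H\) in the two games also overlap. The crux is therefore to choose \(H\)'s contribution \emph{uniformly}. The point I expect to exploit is that in the combined game \(H\) announces in the full model \(M\), whose epistemic operators range over strictly more worlds than those of any updated submodel \(M^{\psi_G \wedge \chi}\), so that \(H\)'s one-shot announcement is at least as powerful as any staged response, while the combined adversary \(A \setminus (G \cup H)\) is a sub-coalition of the inner adversary \(A \setminus H\) and hence strictly weaker. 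Making this comparison precise — reconciling the effect of announcing \(\tau\) directly in \(M\) versus in \(M^{\psi_G}\), and pushing the derived \(\langle \! [ H ] \! \rangle \varphi\) down to \(\varphi\) via the translation trick and the idempotence corollary for coalition announcements — is where the real work lies, and I would finish by deriving a contradiction with the negation of the consequent exactly as in the \(C5\) argument in Appendix~A.
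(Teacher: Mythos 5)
You have assembled the right toolkit --- unfolding both coalition modalities, merging the two rounds of announcements into one simultaneous announcement via $A7$--$A9$ and Lemma~\ref{lemma::1}, and splitting $A \setminus G$ as $H \sqcup (A \setminus (G \cup H))$ --- and this is indeed the skeleton of the paper's own argument. But your proposal stops exactly where the proof has to be done. You correctly identify the crux (the inner winning response of $H$ depends on the first-round adversarial move, while the joint announcement $\Psi$ of $G \cup H$ must be committed before the response $\tau$ is seen), and then you offer only a heuristic in its place: that $H$'s one-shot announcement in the larger model $M$ is at least as powerful as any staged response because its epistemic operators range over more worlds. As stated this is backwards --- agents know \emph{fewer} epistemic formulae in $M$ than in an updated submodel, which is precisely why Lemma~\ref{lemma::1} is needed to let an agent pre-announce $K_i t([\,\cdot\,]\chi_i)$, i.e.\ what it \emph{will} know; the lemma does not by itself remove the dependence of the $\chi_i$ on the adversary's move. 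The paper discharges this step concretely: it writes out the full quantifier prefix $\exists\psi\,\forall\psi'\,\exists\chi\,\forall\chi'$, collapses the nested update to $[\psi \wedge \psi' \wedge [\psi\wedge\psi'](\chi\wedge\chi')]\varphi$ by $A9$, instantiates the universal over the $H \setminus G$ components of $\psi'$ to an existential, and then uses $A7$ and $A8$ to fold each agent's two-round contribution into a single formula $K_i(\psi_i \wedge [\ldots]\chi_i)$, so that the $G\cup H$ part and the $A\setminus(G\cup H)$ part of the merged announcement separate into the required $\exists$/$\forall$ blocks. Your sketch never performs this separation, so the proof is not complete.

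A second, smaller problem: you propose to finish by pushing the derived $\langle\![H]\!\rangle\varphi$ down to $\varphi$ via the idempotence corollary for coalition announcements. That corollary ($\langle\![G]\!\rangle\langle\![G]\!\rangle\varphi \rightarrow \langle\![G]\!\rangle\varphi$) is stated in the paper as a consequence of the very proposition you are proving, so invoking it here would be circular. The paper avoids this entirely by never landing on an intermediate $\langle\![H]\!\rangle\varphi$: it keeps the fully unfolded first-order statement throughout and only reassembles a coalition modality, for $G \cup H$, at the very end.
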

\begin{proof}
Let \(\psi:=  \bigwedge_{i \in G} K_i \psi_i\), \(\psi^\prime:=  \bigwedge_{j \in A \setminus G}\) \(K_j \psi_j^\prime\), \(\chi:= \bigwedge_{k \in H} K_k \chi_k\), and \(\chi^\prime:=\bigwedge_{l \in A \setminus H} K_l \chi_l^\prime\).
    Suppose \((M,w) \models \langle \! [ G ] \! \rangle \langle \! [ H ] \! \rangle \varphi\) for some \(M\) and \(w \in W\). 
    By the semantics, 
\begin{align*}
\exists \psi {\in} \mathcal{L}_{EL}^{G} \ \forall \psi^\prime {\in} \mathcal{L}_{EL}^{A \setminus G} \ \exists \chi {\in} \mathcal{L}_{EL}^{H} \ \forall \chi^\prime {\in} \mathcal{L}_{EL}^{A \setminus H}: (M,w) \models \psi \wedge [\psi \wedge \psi^\prime] (\chi \wedge [\chi \wedge \chi^\prime] \varphi). 
\end{align*}
By \(A7\), we have: \((M,w) \models \psi \wedge [\psi \wedge \psi^\prime] \chi \wedge  [\psi \wedge \psi^\prime] [\chi \wedge \chi^\prime] \varphi\). 
We are interested now in the third conjunct: \( [\psi \wedge \psi^\prime] [\chi \wedge \chi^\prime] \varphi\). 
By \(A9\), we have that \((M,w) \models  [\psi \wedge \psi^\prime \wedge [\psi \wedge \psi^\prime] (\chi \wedge \chi^\prime)] \varphi\).
    Now, let us examine the following conjunction: \(\exists \psi {\in} \mathcal{L}_{EL}^{G}\) \ \(\forall \psi^\prime {\in} \mathcal{L}_{EL}^{A \setminus G}\): \(\psi \wedge \psi^\prime\), which is 
\begin{align*}
\exists \bigwedge_{i \in G} K_i \psi_i \quad \forall \bigwedge_{j \in A \setminus G} K_j \psi_j^\prime: \bigwedge_{i \in G} K_i \psi_i \wedge \bigwedge_{j \in A \setminus G} K_j \psi_j^\prime
\end{align*}
 in the full form. 
We can present the set of agents \(A \setminus G\) as a union of \(G \cup H\) and \(H \setminus G\) by expanding the right conjunct.
So, we have \(\exists \bigwedge_{i \in G}\) \(K_i \psi_i\) \ \(\forall \bigwedge_{m \in H \setminus G}\) \(K_m \psi_m^{\prime \prime}\) \ \(\forall \bigwedge_{j \in A \setminus G \cup H}\) \(K_j \psi_j^\prime\): 
    \begin{equation}
    \label{eq::1}
        \bigwedge_{i \in G} K_i \psi_i \wedge \bigwedge_{m \in H \setminus G} K_m \psi_m^{\prime \prime} \wedge \bigwedge_{j \in A \setminus G \cup H} K_j \psi_j^\prime.
    \end{equation}
 Since none of the universal quantifiers here is vacuous, there are particular \(\psi^{\prime \prime}\) for which the conjunction holds. Formally, \(\exists \bigwedge_{i \in G}\) \(K_i \psi_i\) \ \(\exists \bigwedge_{m \in H \setminus G}\) \(K_m \psi_m^{\prime \prime}\) \ \(\forall \bigwedge_{j \in A \setminus G \cup H}\) \(K_j \psi_j^\prime\): (\ref{eq::1}). Therefore, combining \(G\) and \(H \setminus G\),  we have 
\begin{align*}
\exists \bigwedge_{i \in G \cup H} K_i \psi_i \quad \forall \bigwedge_{j \in A \setminus G \cup H} K_j \psi_j^\prime: \bigwedge_{i \in G \cup H} K_i \psi_i \wedge \bigwedge_{j \in A \setminus G \cup H} K_j \psi_j^\prime.
\end{align*}
The same argument holds for the conjunction \(\exists \chi {\in} \mathcal{L}_{EL}^{H}\) \ \(\forall \chi^\prime {\in} \mathcal{L}_{EL}^{G}\): \(\chi \wedge \chi^\prime\).
Let us redefine our auxiliary formulae: \(\psi:=  \bigwedge_{i \in G \cup H} K_i \psi_i\), \(\psi^\prime:=  \bigwedge_{j \in A \setminus {G \cup H}}\) \(K_j \psi_j^\prime\), \(\chi:= \bigwedge_{i \in G \cup H} K_k \chi_k\), and \(\chi^\prime:=\bigwedge_{j \in A \setminus {G \cup H}} K_l \chi_l^\prime\).
    Thus, we have that \(\exists \psi {\in} \mathcal{L}_{EL}^{G \cup H}\) \ \(\forall \psi^\prime {\in} \mathcal{L}_{EL}^{A \setminus {G \cup H}}\) \ \(\exists \chi {\in}  \mathcal{L}_{EL}^{G \cup H}\) \ \(\forall \chi^\prime {\in} \mathcal{L}_{EL}^{A \setminus {G \cup H}}\): \((M,w) \models [\psi \wedge \psi^\prime \wedge [\psi \wedge \psi^\prime](\chi \wedge \chi^\prime)]\varphi\). In the full form, the latter is 
    \begin{align*}
&\exists  \bigwedge_{i \in G \cup H} \! K_i \psi_i \quad  \forall \! \bigwedge_{j \in A \setminus G \cup H} \! K_j \psi_j^\prime \quad \exists \bigwedge_{i \in G \cup H} K_i \chi_i \quad \forall \bigwedge_{j \in A \setminus G \cup H} K_j \chi_j^\prime: \\
         &(M,w) \models [ \bigwedge_{i \in G \cup H} K_i \psi_i \wedge \bigwedge_{j \in A \setminus G \cup H} K_j \psi^\prime_j \wedge [\bigwedge_{i \in G \cup H} K_i \psi_i \wedge \bigwedge_{j \in A \setminus G \cup H} K_j \psi^\prime_j] (\bigwedge_{i \in G \cup H} K_i \chi_i \wedge \bigwedge_{j \in A \setminus G \cup H} K_j \chi_j^\prime) ] \varphi.   
    \end{align*}
   Using \(A7\) and \(A8\), we can `push' announcements into the scope of knowledge operators:  
    \begin{align*}
           & (M,w) \models [\psi \wedge  \psi^\prime \wedge (\bigwedge_{i \in G \cup H} \bigwedge_{j \in A \setminus G \cup H} (K_i \psi_i \wedge K_j \psi^\prime_j \rightarrow K_i [ K_i \psi_i \wedge K_j \psi^\prime_j] \chi_i)  \wedge \\ 
& \hspace{8cm} \wedge ( K_i \psi_i \wedge K_j \psi^\prime_j \rightarrow K_j [K_i \psi_i \wedge K_j \psi^\prime_j] \chi_j^\prime)) ] \varphi.
    \end{align*}
    By propositional reasoning, the latter is equivalent to \((M,w) \models\) \([ \bigwedge_{i \in G \cup H}\) \(\bigwedge_{j \in A \setminus G \cup H}\) \((K_i \psi_i\) \(\wedge\) \(K_j \psi^\prime_j\) \(\wedge\) \(K_i [ K_i \psi_i\) \(\wedge\) \(K_j \psi^\prime_j] \chi_i\) \(\wedge\) \(K_j [K_i \psi_i\) \(\wedge\)  \(K_j \psi^\prime_j]\) \(\chi_j^\prime) ] \varphi\). Finally, we have 
\begin{align*}
&\exists \bigwedge_{i \in G \cup H} K_i \psi_i \quad \forall \bigwedge_{j \in A \setminus G \cup H} K_j \psi_j^\prime \quad \exists \bigwedge_{i \in G \cup H} K_i \chi_i \quad \forall \bigwedge_{j \in A \setminus G \cup H} K_j \chi_j^\prime:\\ 
&(M,w) \models [ \bigwedge_{i \in G \cup H} \bigwedge_{j \in A \setminus G \cup H} (K_i (\psi_i\wedge [ K_i \psi_i \wedge K_j \psi^\prime_j] \chi_i) \wedge K_j (\psi^\prime_j \wedge [K_i \psi_i \wedge K_j \psi^\prime_j] \chi_j^\prime)) ] \varphi.
\end{align*}    

    Conjuncts of the form \(K_i (\psi_i\) \(\wedge\) \([ K_i \psi_i\) \(\wedge\) \(K_j \psi^\prime_j] \chi_i)\) mean that agent \(i\) can announce \(\psi_i\), i.e. what she knows now, or \([ K_i \psi_i\) \(\wedge\) \(K_j \psi^\prime_j] \chi_i\) (which is equivalent to \(t([ K_i \psi_i\) \(\wedge\) \(K_j \psi^\prime_j] \chi_i)\)), i.e. what she will know after announcements of other agents but not necessarily knows now, or both.
    Since all the variants comprise \(\mathcal{L}_{EL}^{G \cup H}\), we rewrite the notation. Hence, \(\exists \bigwedge_{i \in G \cup H}\) \(K_i \tau_i\) \ \(\forall \bigwedge_{j \in A \setminus G \cup H}\) \(K_j \tau_j^\prime\): \((M,w) \models\) \([ \bigwedge_{i \in G \cup H}\) \(\bigwedge_{j \in A \setminus G \cup H}\) \((K_i \tau_i\) \(\wedge\) \(K_j \tau^\prime_j ) ] \varphi\), and  at the same time \((M,w) \models \bigwedge_{i \in G \cup H} K_i \tau_i\) (\(\bigwedge_{i \in G} K_i \psi_i\) is equivalent to \(\bigwedge_{i \in G \cup H} K_i \psi_i\), where agents from \(H\) announce \(\top\)).
    And, by the semantics, this is \((M,w) \models \langle \! [ G \cup H ] \! \rangle \varphi\).
\end{proof}

\begin {proposition}
\label{prop::app3}
    Let \(x\) be a theory, \(\varphi, \psi \in \mathcal{L}_{CoGAL}\), and \(a \in A\). The following are theories: \(x + \varphi = \{\psi: \varphi \rightarrow \psi \in x\}, K_a x = \{\varphi: K_a \varphi \in x\}\), and \([\varphi]x = \{\psi: [\varphi]\psi \in x\}\).
\end{proposition}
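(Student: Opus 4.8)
The plan is to verify, for each of the three sets, the defining closure conditions of a theory: that it (i) contains $\mathbf{CoGAL}$, (ii) is closed under modus ponens $R0$, and (iii) is closed under the infinitary rules $R5$ and $R6$. The arguments for the three sets run in parallel, and the structural observation that makes the infinitary cases uniform is that the grammar of necessity forms is closed under exactly the three operations defining our sets: if $\eta(\sharp)$ is a necessity form, then so are $\varphi \rightarrow \eta(\sharp)$, $K_a \eta(\sharp)$, and $[\varphi]\eta(\sharp)$. In each case I write $\eta^{+}$ for the necessity form obtained by prefixing $\eta$ with the relevant operator, so that membership in $x+\varphi$, $K_a x$, or $[\varphi]x$ is literally $\eta^{+}(\cdot) \in x$ with $\eta^{+}(\sharp)$ equal to $\varphi \rightarrow \eta(\sharp)$, $K_a \eta(\sharp)$, or $[\varphi]\eta(\sharp)$ respectively.

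First I would check containment of $\mathbf{CoGAL}$. For a theorem $\psi$ one has, respectively, $\varphi \rightarrow \psi \in x$ (since $\psi \rightarrow (\varphi \rightarrow \psi)$ is a tautology in $\mathbf{CoGAL} \subseteq x$ and $x$ is closed under $R0$), $K_a \psi \in x$ (by necessitation $R1$), and $[\varphi]\psi \in x$ (by announcement necessitation $R2$). In each case this says exactly that $\psi$ lies in the corresponding set. Next I would handle closure under $R5$ and $R6$ uniformly. For example, to show $[\varphi]x$ is closed under $R5$: suppose $\eta([\psi]\theta) \in [\varphi]x$ for every $\psi \in \mathcal{L}_{EL}^G$, i.e.\ $\eta^{+}([\psi]\theta) = [\varphi]\eta([\psi]\theta) \in x$, where $\eta^{+}(\sharp) = [\varphi]\eta(\sharp)$ is again a necessity form. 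Since $x$ is closed under $R5$, we get $\eta^{+}([G]\theta) = [\varphi]\eta([G]\theta) \in x$, which is precisely $\eta([G]\theta) \in [\varphi]x$. The $R6$ case, and the cases of $x+\varphi$ and $K_a x$, are word-for-word the same with $\eta^{+}(\sharp)$ taken to be $\varphi \rightarrow \eta(\sharp)$ or $K_a \eta(\sharp)$; the only facts used are closure of $x$ under the corresponding rule and closure of the necessity-form grammar.

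It remains to verify closure under $R0$. For $x + \varphi$ this is pure propositional reasoning: from $\varphi \rightarrow \psi \in x$ and $\varphi \rightarrow (\psi \rightarrow \chi) \in x$, the tautology $(\varphi \rightarrow \psi) \rightarrow ((\varphi \rightarrow (\psi \rightarrow \chi)) \rightarrow (\varphi \rightarrow \chi))$ together with two applications of $R0$ yields $\varphi \rightarrow \chi \in x$. For $K_a x$ it follows from the normality axiom $A1$, namely $K_a(\psi \rightarrow \chi) \rightarrow (K_a \psi \rightarrow K_a \chi)$, applied inside $x$ via $R0$. I expect the main obstacle to be the analogous step for $[\varphi]x$, which needs the announcement operator to be normal, that is, $[\varphi](\psi \rightarrow \chi) \rightarrow ([\varphi]\psi \rightarrow [\varphi]\chi)$ must be a theorem. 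This is not among the listed axioms, but it is a standard derivable principle of $\mathbf{PAL}$ obtained from the reduction axioms $A5$--$A9$ together with $R2$ (for instance, via the monotonicity rule for $[\varphi]$ and distribution over conjunction $A7$). Once it is available as a theorem it lies in $x$, and two applications of $R0$ give $[\varphi]\chi \in x$ from $[\varphi]\psi, [\varphi](\psi \rightarrow \chi) \in x$. Establishing (or carefully citing) this normality of $[\varphi]$ is the only non-routine part of the argument; everything else is the uniform necessity-form bookkeeping above.
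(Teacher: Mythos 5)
Your proposal is correct and follows essentially the same route as the paper: the heart of the argument is exactly the paper's observation that prefixing a necessity form with $\varphi \rightarrow$, $K_a$, or $[\varphi]$ yields again a necessity form, so closure of $x$ under $R5$ and $R6$ transfers directly to $x+\varphi$, $K_a x$, and $[\varphi]x$. The paper only writes out the $R5$/$R6$ cases and delegates containment of $\mathbf{CoGAL}$ and closure under $R0$ to the cited proof of Balbiani et al.; your explicit verification of those parts (including the correctly flagged need for the derivable normality principle $[\varphi](\psi\rightarrow\chi)\rightarrow([\varphi]\psi\rightarrow[\varphi]\chi)$) is sound and fills in exactly what the citation covers.
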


\begin{proof}
    We just expand the proof from \cite {balbiani08} by showing that corresponding theories are closed under \(R5\) and \(R6\).
    
    Suppose that \(\eta ([\psi]\chi) \in x + \varphi\) for all \(\psi \in\mathcal{L}_{EL}^G\). 
    It means that \(\varphi \rightarrow \eta ([\psi]\chi) \in x\) for all \(\psi \in \mathcal{L}_{EL}^G\). 
    Since \(\varphi \rightarrow \eta ([\psi]\chi)\) is a necessity form, and \(x\) is closed under \(R5\), we infer that \(\varphi \rightarrow \eta ([G]\chi) \in x\), and, consequently, \(\eta ([G]\chi) \in x + \varphi\). 
    So, \(x + \varphi\) is closed under \(R5\).
    Now, let \(\forall \psi {\in} \mathcal{L}_{EL}^G\) \ \(\exists \tau {\in} \mathcal{L}_{EL}^{A \setminus G}:\) \(\eta( \psi \rightarrow \langle \psi \wedge \tau \rangle \chi) \in x + \varphi\). 
    It means that  \(\forall \psi {\in} \mathcal{L}_{EL}^G\) \ \(\exists \tau {\in} \mathcal{L}_{EL}^{A \setminus G}:\) \(\varphi \rightarrow \eta( \psi \rightarrow \langle \psi \wedge \tau \rangle \chi) \in x\). 
    Since \(\varphi \rightarrow \eta( \psi \rightarrow \langle \psi \wedge \tau \rangle \chi)\) is a necessity form, and \(x\) is closed under \(R6\), we infer that \(\varphi \rightarrow \eta( [ \! \langle G \rangle \! ] \chi) \in x\), and, consequently, \(\eta( [ \! \langle G \rangle \! ] \chi) \in x + \varphi\). So, \(x + \varphi\) is closed under \(R6\).
    
    Suppose that \(\eta ([\psi]\chi) \in K_a x\) for all \(\psi \in\mathcal{L}_{EL}^G\). 
    It means that \(K_a \eta ([\psi]\chi) \in x\) for all \(\psi \in \mathcal{L}_{EL}^G\). 
    Since \(K_a \eta ([\psi]\chi)\) is a necessity form, and \(x\) is closed under \(R5\), we infer that \(K_a \eta ([G]\chi) \in x\), and, consequently, \(\eta ([G]\chi) \in K_a x\). 
    So, \(K_a x\) is closed under \(R5\).
    Now, let  \(\forall \psi {\in} \mathcal{L}_{EL}^G\) \ \(\exists \tau {\in}\mathcal{L}_{EL}^{A \setminus G}:\) \(\eta ( \psi \rightarrow \langle \psi \wedge \tau \rangle \chi) \in K_a x\). 
    It means that  \(\forall \psi {\in} \mathcal{L}_{EL}^G\) \ \(\exists \tau {\in} \mathcal{L}_{EL}^{A \setminus G}:\) \(K_a \eta( \psi \rightarrow \langle \psi \wedge \tau \rangle \chi) \in x\). Since \(K_a \eta( \psi \rightarrow \langle \psi \wedge \tau \rangle \chi)\) is a necessity form, and \(x\) is closed under \(R6\), we infer that \(K_a \eta( [ \! \langle G \rangle \! ] \chi) \in x\), and, consequently, \(\eta( [ \! \langle G \rangle \! ] \chi) \in K_a x\). So, \(K_a x\) is closed under \(R6\).
    
    Finally, suppose that \(\eta ([\psi]\chi) \in [\varphi] x\) for all \(\psi \in \mathcal{L}_{EL}^G\). 
    It means that \([\varphi] \eta ([\psi]\chi) \in x\) for all \(\psi \in\mathcal{L}_{EL}^G\). 
    Since \([\varphi] \eta ([\psi]\chi)\) is a necessity form, and \(x\) is closed under \(R5\), we infer that \([\varphi] \eta ([G]\chi) \in x\), and, consequently, \(\eta ([G]\chi) \in [\varphi] x\). 
    So, \([\varphi] x\) is closed under \(R5\).
    Now, let \(\forall \psi \in \mathcal{L}_{EL}^G\) \ \(\exists \tau \in \mathcal{L}_{EL}^{A \setminus G}:\) \(\eta( \psi \rightarrow \langle \psi \wedge \tau \rangle \chi) \in [\varphi] x\). 
    It means that  \(\forall \psi {\in} \mathcal{L}_{EL}^G\) \ \(\exists \tau {\in} \mathcal{L}_{EL}^{A \setminus G}:\) \([\varphi] \eta( \psi \rightarrow \langle \psi \wedge \tau \rangle \chi) \in x\). 
    Since \([\varphi] \eta( \psi \rightarrow \langle \psi \wedge \tau \rangle \chi)\) is a necessity form, and \(x\) is closed under \(R6\), we infer that \([\varphi] \eta( [ \! \langle G \rangle \! ] \chi) \in x\), and, consequently, \(\eta ( [ \! \langle G \rangle \! ] \chi) \in [\varphi] x\). So, \([\varphi] x\) is closed under \(R6\).
\end{proof}

\begin{proposition}
\label{prop::balbiani}
Let \(\varphi \in \mathcal{L}_{CoGAL}\). Then \(\mathbf{CoGAL} + \varphi\) is consistent iff \(\neg \varphi \not \in \mathbf{CoGAL}\).
\end{proposition}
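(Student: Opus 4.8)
The plan is to unfold the definitions of \(\mathbf{CoGAL} + \varphi\) and of consistency, and to reduce the claim to the interderivability of \(\varphi \rightarrow \bot\) and \(\neg \varphi\) inside \(\mathbf{CoGAL}\). First I would invoke Proposition \ref{prop::mct}, which guarantees that \(\mathbf{CoGAL} + \varphi = \{\psi : \varphi \rightarrow \psi \in \mathbf{CoGAL}\}\) is itself a theory; hence, by the definition of consistency for theories, the consistency of \(\mathbf{CoGAL} + \varphi\) is exactly the statement \(\bot \notin \mathbf{CoGAL} + \varphi\).

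Next I would establish the central equivalence: \(\bot \in \mathbf{CoGAL} + \varphi\) iff \(\neg \varphi \in \mathbf{CoGAL}\). By the definition of \(\mathbf{CoGAL} + \varphi\), membership of \(\bot\) amounts to \(\varphi \rightarrow \bot \in \mathbf{CoGAL}\). Since \(\mathbf{CoGAL}\) contains all instantiations of propositional tautologies (A0) and is closed under modus ponens (R0), the formulae \(\varphi \rightarrow \bot\) and \(\neg \varphi\) are interderivable in \(\mathbf{CoGAL}\); therefore \(\varphi \rightarrow \bot \in \mathbf{CoGAL}\) iff \(\neg \varphi \in \mathbf{CoGAL}\). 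Contraposing this equivalence gives \(\bot \notin \mathbf{CoGAL} + \varphi\) iff \(\neg \varphi \notin \mathbf{CoGAL}\), which is precisely the claim.

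For concreteness I would also spell out both directions explicitly. For the `if' direction, assuming \(\neg \varphi \notin \mathbf{CoGAL}\), I suppose towards a contradiction that \(\mathbf{CoGAL} + \varphi\) is inconsistent, so \(\bot \in \mathbf{CoGAL} + \varphi\); the equivalence above then forces \(\neg \varphi \in \mathbf{CoGAL}\), contradicting the assumption. For the `only if' direction, assuming \(\neg \varphi \in \mathbf{CoGAL}\), I note that \(\varphi \in \mathbf{CoGAL} + \varphi\) (because \(\varphi \rightarrow \varphi \in \mathbf{CoGAL}\)) and \(\neg \varphi \in \mathbf{CoGAL} + \varphi\) (because \(\varphi \rightarrow \neg \varphi\) follows from \(\neg \varphi\) by R0 and A0); closure of the theory under R0 then yields \(\bot \in \mathbf{CoGAL} + \varphi\), so it is inconsistent.

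I do not expect a genuine obstacle here: the only non-routine ingredient, namely that \(\mathbf{CoGAL} + \varphi\) is closed under the infinitary rules R5 and R6 and hence really is a theory, is already supplied by Proposition \ref{prop::mct} (Appendix Proposition \ref{prop::app3}). The remaining work is purely propositional bookkeeping, and the argument is the direct analogue of the corresponding step in the \(\mathbf{APAL}\) completeness proof of \cite{balbiani08}.
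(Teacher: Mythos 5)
Your proposal is correct and follows essentially the same route as the paper: both directions reduce to the equivalence between $\bot \in \mathbf{CoGAL} + \varphi$, $\varphi \rightarrow \bot \in \mathbf{CoGAL}$, and $\neg \varphi \in \mathbf{CoGAL}$, using the definition of $x + \varphi$, propositional reasoning (A0, R0), and Proposition \ref{prop::app3} to ensure $\mathbf{CoGAL} + \varphi$ is a theory. Your write-up is in fact slightly more explicit than the paper's about why $\varphi$ and $\neg\varphi$ together put $\bot$ into the theory, but there is no substantive difference in approach.
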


\begin{proof}

  \textit{From left to right}. Suppose to the contrary that \(\mathbf{CoGAL} + \varphi\) is consistent and \(\neg \varphi \in \mathbf{CoGAL}\). Then, having both \(\varphi\) and \(\neg \varphi\) means that \(\bot \in \mathbf{CoGAL}+\varphi\), which contradicts to \(\mathbf{CoGAL}+\varphi\) being consistent. 
  
  \textit{From right to left}. Let us consider the contrapositive: if \(\mathbf{CoGAL} + \varphi\) is inconsistent, then \(\neg \varphi \in \mathbf{CoGAL}\). Since \(\mathbf{CoGAL} + \varphi\) is inconsistent, \(\bot \in \mathbf{CoGAL} + \varphi\), or, by Proposition \ref{prop::app3}, \(\varphi \rightarrow \bot \in \mathbf{CoGAL}\). By consistency of \(\mathbf{CoGAL}\) and propositional reasoning, we have that \(\neg \varphi \in \mathbf{CoGAL}\). \qedhere
\end{proof} 

\end{document}